\newtheorem{theorem}{Theorem}
\newtheorem{lemma}{Lemma}
\newtheorem{corollary}{Corollary}
\newtheoremstyle{break}
  {\topsep}{\topsep}
  {\upshape}{}
  {\bfseries}{}
  {\newline}{}
\theoremstyle{break}
\begin{document}
\title{Kernel Density Balancing}

 \author[1]{John Park}
  \author[1,2,*]{Ning Hao}
  \author[1,2]{Yue Selena Niu}
  \author[3]{Ming Hu}
  \affil[1]{Department of Mathematics, the University of Arizona} 
  \affil[2]{Statistics \& Data Science GIDP, the University of Arizona}
  \affil[3]{Department of Quantitative Health Sciences, Lerner Research Institute, Cleveland Clinic Foundation}
  \affil[*]{Corresponding author: nhao@arizona.edu}
  
\maketitle

\begin{abstract}
High-throughput chromatin conformation capture (Hi-C) data provide insights into the 3D structure of chromosomes, with normalization being a crucial pre-processing step. A common technique for normalization is matrix balancing, which rescales rows and columns of a Hi-C matrix to equalize their sums. Despite its popularity and convenience, matrix balancing lacks statistical justification. In this paper, we introduce a statistical model to analyze matrix balancing methods and propose a kernel-based estimator that leverages spatial structure. Under mild assumptions, we demonstrate that the kernel-based method is consistent, converges faster, and is more robust to data sparsity compared to existing approaches.

\end{abstract}

\noindent {\bf Keywords:} Doubly stochastic matrix, Hi-C data, histogram, matrix balancing, nonparametric statistics.

\section{Introduction}
Matrix balancing is a widely used technique for removing biases, correcting measurement errors, and enabling meaningful comparisons in matrix data across fields such as probability theory, economics, and biology \citep{Idel16}. Informally, it adjusts the entries of a nonnegative matrix so that each row and column sums to specified target values. This method has practical applications in many areas. For example, it is used in economics to balance input-output and migration models \citep{Bacharach65, ULT66, deBeer2010}, in genomics to normalize measurements of DNA interactions in cells \citep{Lyu20,Imakaev12}, in Markov chain modeling to establish valid transition probabilities \citep{Sinkhorn64}, and in network analysis to fairly evaluate connections or interactions among nodes \citep{Krupp79}.  

Often, the target row and column sums are all ones, in which case the matrix is called doubly stochastic. Matrix balancing adjusts a square matrix into a doubly stochastic form by multiplying it on the left and right by positive diagonal matrices. Under mild conditions, it has been established that a unique solution exists, although no closed-form expression is generally available \citep{Sinkhorn64}. Primitive matrix balancing algorithms typically rely on iterative procedures that alternately adjust rows and columns until convergence. One notable algorithm is the Sinkhorn-Knopp algorithm, a popular iterative method alternating between row and column normalizations to reach convergence \citep{Sinkhorn67}. These methods remain popular due to their simplicity and ease of implementation, and have been extended to continuous settings as shown in \cite{BLN94}. In the symmetric case, it is required that both input and output matrices are symmetric. The symmetric analogues of these algorithms have also been developed and are commonly applied. The existence and uniqueness of the doubly stochastic solution, as well as the convergence of iterative balancing methods, have been well studied for square matrices with positive entries \citep{Sinkhorn67,soules1991rate}. However, scaling matrix balancing to large matrices remains an active area of research, with continued efforts to design more efficient algorithms and computational techniques \citep{knight2008sinkhorn,johnson2009scaling,KR12,allen2017much}. For a comprehensive review and recent developments in matrix balancing and its applications, see \cite{Idel16}.

One of the most prominent modern applications of matrix balancing is in the normalization of Hi‑C data, which are generated from high-throughput chromatin conformation capture experiments designed to study the three-dimensional organization of the genome. Specifically, Hi-C techniques measure how frequently pairs of genomic regions interact, producing large matrices representing interaction frequencies. Due to technical biases such as uneven distribution of restriction enzyme cut sites, guanine-cytosine (GC) content, and differences in mappability, raw Hi-C matrices require balancing before analysis. Matrix balancing algorithms have been widely applied to remove these biases, ensuring accurate comparisons across genomic regions \citep{Rao14,Imakaev12,Lyu20}.  

Despite its widespread use and proven utility, current matrix balancing approaches face several challenges. First, the existing algorithms are sensitive to data sparsity, a common characteristic of high-resolution Hi-C data, often resulting in convergence issues or unstable normalization. Second, certain applications of matrix balancing involve data matrices with inherent spatial structure. For example, in Hi-C data, the rows and columns are ordered by the genomic locations along the chromosome. Standard matrix balancing methods generally do not account for this spatial dependency, leading to information loss. Third, the large size of high-resolution Hi-C data often necessitates entry-wise storage, requiring conversion to matrix form before processing, which could be intractable. Finally, most existing works treat normalization as a deterministic process, overlooking data uncertainty and lacking a robust statistical framework to assess performance.
 
In this paper, we propose a novel nonparametric model that treats the observed matrix for balancing as a pre-binned random sample from an underlying distribution. This perspective enables us to quantify data uncertainty and establish the statistical consistency of matrix balancing algorithms. Within this framework, we study a continuous analogue of matrix balancing, referred to as the density balancing problem. We introduce a kernel smoothing–based algorithm for density balancing and demonstrate that it achieves a faster convergence rate than traditional matrix balancing methods, which can be understood as a histogram-based balancing algorithm. Theoretically, we establish convergence rate results of density balancing algorithms, providing stronger justification for their use in processing Hi-C data. Methodologically, kernel balancing leverages spatial structure by borrowing information from neighboring entries through a data-adaptive bandwidth, effectively mitigating sparsity and smoothing over zero entries. Computationally, the algorithm also exploits the structure of raw Hi-C data, allowing it to operate directly on the sparse, entry-wise format without requiring full matrix reconstruction. This makes it feasible to apply the method at the highest available resolution, even when the matrix is too large to store explicitly. Together, these contributions address a key theoretical gap in matrix balancing and offer a practical solution for high-resolution Hi-C data analysis.

\section{Matrix Balancing}
\label{Section::Background}
\subsection{Background}\label{Section::2.1}
Let $C$ be a positive $n\times n$ matrix and $e$ represent the $n\times 1$ vector consisting of all $1$s. The goal of matrix balancing is to find a matrix $P$ and diagonal matrices $D_1$ and $D_2$ such that 
\[P=D_1CD_2,\quad Pe=e,\quad \text{and}\quad P^{\top}e=e.\] 
Such a matrix $P$ is called doubly stochastic, the matrices $D_1$ and $D_2$ are referred to as balancing matrices, and their diagonal entries balancing vectors. 

Matrix balancing arises naturally in various applications where equilibration of marginal quantities is important. For example, in Markov Chain estimation, a row-normalized observed transition matrix is often used as an estimate for the true transition matrix. If the transition matrix is also known to be doubly stochastic, matrix balancing may be used to turn an arbitrary observed transition matrix into a doubly stochastic matrix \citep{Sinkhorn64}. In population analysis, immigration and emigration data are often summarized in matrices, with rows and columns indexed by countries and entries representing population flows between them. Due to reporting inconsistencies or missing data, row and column sums often mismatch, and balancing techniques can adjust them to match specified targets \citep{deBeer2010}. In genomics, particularly in the analysis of Hi-C data, balancing methods are used to correct systematic biases in measuring chromatin interactions, improving the reliability of downstream interpretations \citep{Lyu20}. Similar needs arise in other domains where flows must be balanced, including trade models in economics, traffic flows in networks, and optimal transport.

Because the matrix balancing problem arises naturally in a variety of contexts, related algorithms have been studied and named as early as the 1940s. A comprehensive historical review is provided in \citet{Idel16}. A foundational result is Sinkhorn’s Theorem \citep{Sinkhorn64}, which states that if $C$ is strictly positive, then there exists a unique doubly stochastic matrix $P$ of the form $D_1 C D_2$. The associated Sinkhorn-Knopp (SK) algorithm, outlined in Algorithm~\ref{Alg::SK}, alternates row and column normalizations until convergence. Formal statements of the relevant theorems are given in Appendix~\ref{Section::KnownThms}.

\begin{algorithm}
\caption{Sinkhorn-Knopp (SK)}\label{Alg::SK}
\begin{algorithmic}[1]
    \Require Positive matrix $C$; tolerance $\varepsilon$
    \Ensure Balanced matrix $P$ and balancing matrices $D_1$ and $D_2$
    \State $R\gets \text{diag}\left(Ce\right)$, $S\gets \text{diag}\left(C^{\top}e\right)$ 
    \State $D_1^{(0)}\gets R^{-1}$, $D_2^{(0)}\gets I$, $P^{(0)}\gets C$, $t\gets 0$
    \While{$\left\lVert R-I\right\rVert_{\infty}>\varepsilon$\textbf{ or }$\left\lVert S-I\right\rVert_{\infty}>\varepsilon$}
        \State $\widetilde{P}^{(t+1)}\gets R^{-1}P^{(t)}$\Comment{$\widetilde{P}^{(t+1)}$ is the row-normalized $P^{(t)}$}
        \State $S\gets \text{diag}\left( [\tilde{P}^{(t+1)}]^{\top}e\right)$
        \State $P^{(t+1)}\gets \widetilde{P}^{(t+1)}S^{-1}$\Comment{${P}^{(t+1)}$ is the column-normalized $\widetilde{P}^{(t+1)}$}
        \State $R\gets \text{diag}\left( P^{(t+1)}e\right)$
        \State $D^{(t+1)}_1\gets R^{-1}D^{(t)}_1$, $D^{(t+1)}_2\gets D^{(t)}_2S^{-1}$, $t\gets t+1$
    \EndWhile
    \State \Return $P,D_1,D_2$
\end{algorithmic}
\end{algorithm}

A similar problem may be asked for scaling matrices to arbitrary row and column sums. Let $r$ and $c$ be positive $n\times1$ vectors such that $r^{\top}e=c^{\top}e$. The goal of matrix scaling is to find a matrix $P$ and diagonal matrices $D_1$ and $D_2$ such that $P=D_1CD_2$, $Pe=r$ and $P^{\top}e=c$. \cite{Sinkhorn672} showed that having a strictly positive matrix $C$ is also a sufficient condition for the existence of a unique solution to this problem. Similarly, a simple algorithm of alternating row and column scaling converges to the solution; see Theorem~\ref{Th::ASK} in Appendix~\ref{Section::KnownThms} for details.

When the matrix $C$ is symmetric, the row and column sums are identical. The goal of matrix balancing can then be summarized by a single equation --- finding a diagonal matrix $D$ such that 
\begin{equation*}
    DCDe=e.
\end{equation*}
The matrix $D$ is then the unique balancing matrix and its diagonal entries the unique balancing vector. 
In the symmetric setting, Algorithm \ref{Alg::SK} is not able to preserve symmetry during the iteration process. A variant, shown in Algorithm~\ref{Alg::SSK}, addresses this issue by scaling the rows and columns simultaneously in each step, thereby maintaining symmetry \citep{KnightRuiz14}. 

\begin{algorithm}
\caption{Symmetric Sinkhorn-Knopp (SSK)}\label{Alg::SSK}
\begin{algorithmic}[1]
    \Require Symmetric positive matrix $C$; tolerance $\varepsilon$
    \Ensure Balanced matrix $P$ and balancing matrix $D$
    \State $R\gets \text{diag}\left({Ce}\right)$
    \State $D^{(0)}\gets R^{-\frac12}$, $P^{(0)}\gets C$, $t\gets0$
    \While{$\left\lVert R-I\right\rVert_{\infty}>\varepsilon$} 
        \State $P^{(t+1)}\gets R^{-\frac12}P^{(t)}R^{-\frac12}$ \label{AlgStep::SSKStart}
        \State $R\gets\text{diag}\left(P^{(t)}e\right)$ \label{AlgStep::SSKEnd}
        \State $D^{(t+1)}\gets D^{(t)}R^{-\frac12}$, $t\gets t+1$
    \EndWhile
    \State \Return $P,D$
\end{algorithmic}
\end{algorithm}

Researchers have also studied the continuous analogue of this problem. Let $f(x,y)$ be a bounded, positive, continuous function defined on the unit square. Given two positive continuous functions on the unit interval, $r(x)$ and $c(y)$, the goal of density balancing is to find functions $h_1$ and $h_2$ on the unit interval such that 
\begin{align*}
	\int_0^1h_1(x)f(x,y)h_2(y)dx&=r(y)\text{ for all }y\in[0,1]\\
	\int_0^1h_1(x)f(x,y)h_2(y)dy&=c(x)\text{ for all }x\in[0,1].
\end{align*}
\cite{BLN94} proved existence and uniqueness results to the kernel scaling problem under a similar positivity condition and a convergence result for a continuous version of  Algorithm \ref{Alg::SK}; see Theorem \ref{Th::KSK} in the appendix. These results are generalizations of the case where $f(x,y)$ is symmetric and $r(y)=c(x)$, for which similar results were shown earlier \citep{Nowosad64}.  Just as in the discrete case, a continuous version of Algorithm \ref{Alg::SSK} can be constructed, with a continuous function replacing the matrix and integrals replacing the row sums. 

\begin{algorithm}
\caption{Continuous Symmetric Sinkhorn-Knopp (CSSK)}\label{Alg::CSSK}
\begin{algorithmic}[1]
    \Require Symmetric density $f$ on $[0,1]^2$; tolerance $\varepsilon$
    \Ensure Balanced density $p(x,y)$ and balancing function $h$
    \State $r(y) \gets \int_0^1f(x,y)dx$ 
    \State $h(y)\gets \frac{1}{\sqrt{r(y)}}$, $p^{(0)}(x,y)\gets f(x,y)$, $t\gets0$
    \While{$\left\lVert r-1\right\rVert_{\infty}>\varepsilon$} 
        \State $p^{(t+1)}(x,y)\gets \frac{p^{(t)}(x,y)}{\sqrt{r(x)r(y)}}$ 
        \State $r(y)\gets\int_0^1f(x,y)dx$ 
        \State $h(y)^{(t+1)}\gets \frac{h(y)}{\sqrt{r(y)}}$, $t\gets t+1$
    \EndWhile
    \State \Return $p,h$
\end{algorithmic}
\end{algorithm}

For the remainder of the paper, we focus on the symmetric case, although most of our results can be extended to the general setting.

\subsection{Applications to Hi-C data}
The output of a Hi-C experiment consists of fragments of interacting pieces of DNA. Analyzing these fragments directly is noisy, and the observed interactions are instead binned into a fixed genomic interval. The resulting matrices are high-dimensional, sparse, and symmetric, and their entries represent the observed interaction frequency between two genomic locations. The size of the bins can be adjusted depending on the goals of the analysis, resulting in square matrices with dimensions ranging from $250$ to $250000$. Due to their size, Hi-C matrices are typically stored in coordinate list format, as tuples of row index, column index, and nonzero count value.

Following this pre-processing step, normalization is performed on the Hi-C matrix to correct systematic biases. One class of normalization methods is based on the ``equal loci visibility'' assumption, which states that the number of interactions should be roughly the same for each region of DNA.  Mathematically, this corresponds to the assumption that the Hi-C matrix should have constant row and column sums. These methods also assume that the biases are factorizable, that is, the observed frequency satisfies $C_{ij}=d_id_jP_{ij}$ where $d$ is a vector of biases and $P$ is the true contact frequency \citep{Imakaev12}. After scaling the problem appropriately, this corresponds to finding a diagonal matrix $D$ such that $DCD=P$ for a doubly stochastic matrix $P$, which is exactly the symmetric matrix balancing problem introduced in the previous subsection. 

The matrix balancing techniques introduced in Section \ref{Section::2.1} are guaranteed to converge when the input matrix is strictly positive. However, the presence of zeros in Hi-C data poses practical challenges. In particular, high-resolution Hi-C matrices are extremely sparse, sometimes requiring alternative normalization techniques.

\section{Statistical Model}
\label{Section::SM}
In most applications and analyses, matrix balancing is treated as a deterministic problem, overlooking the uncertainty inherent in the data generation process. More importantly, existing methods often fail to fully utilize the spatial relationships among rows (and columns), leading to information loss. Consequently, many matrix balancing algorithms are statistically inefficient, and their statistical properties are not fully understood.    

In this section, we address these limitations by introducing a nonparametric statistical model that treats matrix entries as binned random observations drawn from a continuous joint density. Under this framework, the data matrix corresponds to a two-dimensional histogram estimator of the underlying density, whose statistical properties can be rigorously analyzed. Moreover, this model allows the development of new balancing techniques based on kernel density estimation.

Let $f^*(x,y)$ be a joint density on the unit square $[0,1]^2\subset\mathbb{R}^2$ satisfying the following two conditions:
\begin{enumerate}[(a)]
	\item Symmetry: $f^*(x,y)=f^*(y,x)$. 
	\item Marginal Uniformity: $f^*(x,y)$ has standard uniform marginals, \[f_x^*(x)=\int_0^1f(x,y)dy=1\quad\text{and}\quad  f_y^*(y)=\int_0^1f(x,y)dx=1.\]
\end{enumerate}

Densities that satisfy these conditions are sometimes referred to as symmetric copulas in the statistical literature \citep{Sklar59}. The marginals of a copula typically represent the cumulative distribution functions (CDFs) of the variables of interest, with research focusing on understanding their dependence structure.
To highlight the connection between our research question and the matrix balancing problem, we refer to such a density function as a symmetric and doubly stochastic density (SDSD).

In Hi-C data analysis, we may assume that, in the absence of experimental bias, the data are independently drawn from an SDSD, then binned into a matrix. In practice, however, experimental bias distorts the data, so that the observed samples are drawn from a symmetric density $f(x, y)$, derived from an underlying SDSD $f^*(x, y)$ through a factorizable bias function
\begin{equation}\label{Eq::Cbal}
f(x,y) = g(x)f^*(x,y)g(y),
\end{equation}
where $g$ is a positive, continuous function. While the resulting density $f$ remains symmetric, it generally does not satisfy the marginal uniformity condition required by an SDSD.

Under this framework, we aim to estimate the underlying SDSD $f^*$ and the bias function $g$, given a random sample from the distorted density $f$. When $f$ is known, $f^*$ can be recovered from $f$ by iteratively normalizing the marginals of $f$ as we mentioned in Section \ref{Section::Background}. In practice, $f$ is unknown and estimated by either a histogram or a kernel density estimator, which leads to estimators of $f^*$ via balancing algorithms. Before studying these approaches and their theoretical properties, we briefly demonstrate the Hi-C data under our framework in detail.   
 
The raw Hi-C data are usually provided in a three-column format, where each row records a pair of genomic loci, represented by their horizontal and vertical positions, and the corresponding contact frequency. This format reflects the inherently sparse nature of Hi-C data, especially at high resolutions, where most locus pairs have zero or low interaction counts. The data can be interpreted as binned observations from a continuous interaction surface, with symmetry implied by the nature of chromatin contact (i.e., contact between locus $x$ and $y$ is equivalent to that between $y$ and $x$). To perform analysis, these data are often aggregated into matrices by binning the genome into fixed-length intervals, but the initial three-column representation preserves full resolution and is more computationally efficient for storage and preprocessing.  

Hypothetically, the Hi-C data form a random sample from a density $f$. To make it precise, we scale the chromosome of interest to the unit interval $[0,1]$ and assume that observed are $N$ independent and identically distributed (IID) pairs $\left\{\left(X_k,Y_k)\right)\right\}_{k=1}^N$ from $f$ on the unit square. We may always assume $X_k\leq Y_k$ as the order does not matter. At a specific resolution, the chromosome is divided into $n$ equal-length intervals, the $n\times n$ Hi-C matrix $C=\left(C_{ij}\right)$ is obtained as 
\begin{equation*}
C_{ij}=C_{ji}=\left|\left\{k:\left(X_k,Y_k\right)\in\left[\frac{i-1}{n},\frac{i}{n}\right)\times\left[\frac{j-1}{n},\frac{j}{n}\right)\right\}\right|,\quad 1\leq i< j\leq n.
\end{equation*}
This can be roughly understood as there are $C_{ij}$ observations in a small square neighborhood centered at   \[\left(\frac{2i-1}{2n},\frac{2j-1}{2n}\right).\]
We should point out that in practice, the Hi-C data are given as a pre-binned form
\[\left\{\left(\frac{2i-1}{2n},\frac{2j-1}{2n},C_{ij}\right):1\leq i< j\leq n,C_{ij}> 0\right\}.\]
Nevertheless, conceptually, it is helpful to understand the balancing algorithms using the random observations $\left\{\left(X_k,Y_k)\right)\right\}_{k=1}^N$. Practically, by using the highest resolution, the information loss on the precise loci is negligible. Technically, we will use a triple $(X_k,Y_k,c_k)$ to take into account the multiplicity in the kernel density estimator.

\section{Balancing Algorithms}
\label{Section::Alg}
\subsection{Kernel Density Balancing}
Under the model described in the previous section, the input matrix to a balancing algorithm can be interpreted as a histogram estimator of the density function $f$. Consequently, the resulting balancing vector may be viewed as a discrete estimator of $g^{-1}$. From a density estimation perspective, it is well known that histogram estimators do not achieve an optimal convergence rate and are outclassed by kernel density estimators \citep{wand1994kernel}. To develop a statistically optimal procedure, one may consider a kernel smoothing-based approach. For example, a conceptually simple approach is to apply a continuous analogue of the matrix balancing algorithm to a 2D kernel density estimate of $f$, yielding an estimate of $f^*$. Nevertheless, it is not convenient to implement a 2D kernel density estimator and execute a balancing algorithm on it. Specifically, a naive implementation would require repeated numerical integration of a 2D function as shown in Algorithm \ref{Alg::CSSK}. 

We propose to use a symmetric product kernel $K_2$ for density estimation, bypassing the 2D density estimation step and significantly simplifying the implementation. To elaborate, we observe that the marginal of a 2D density estimator is itself a density estimator of the marginal distribution, as formally stated in Lemma 1 below. Using this trick allows us to conduct the balancing algorithm from the raw Hi-C data directly. Moreover, in the iterations, instead of recording marginals, we assign and update a weight for each observation to track the balancing procedure, as shown in Algorithm \ref{Alg::KSK}.

\begin{lemma}\label{lemma1}
    Let $\{(X_k,Y_k)\}_{k=1}^N$ be a random sample from a density $f$ defined on $[0,1]^2$, and $K_2(x,y)=K(x)K(y)$ be a symmetric product kernel. Then the kernel density estimators satisfy 
    \begin{equation}\label{equa2}
        \hat f_x(x)=\int_{-\infty}^\infty\hat f(x,y)dy,
    \end{equation} 
    where
    \[\hat f(x,y)=\frac{1}{Nh^2}\sum_{k=1}^N K_2\left(\frac{x-X_k}{h},\frac{y-Y_k}{h}\right), \text{ and }\hat f_x(x)=\frac{1}{Nh}\sum_{k=1}^N K\left(\frac{x-X_k}{h}\right).\]
    Let $\{(X_k,Y_k)\}_{k=1}^N$ be a random sample from a symmetric density $f$. Then Equation \eqref{equa2} holds with symmetric density estimators
    \begin{equation*} 
     \hat f(x,y)=\frac{1}{2Nh^2}\sum_{k=1}^N \left[K_2\left(\frac{x-X_k}{h},\frac{y-Y_k}{h}\right)+K_2\left(\frac{x-Y_k}{h},\frac{y-X_k}{h}\right)\right],   
    \end{equation*}
     and 
    \[\hat f_x(x)=\frac{1}{2Nh}\sum_{k=1}^N \left[ K\left(\frac{x-X_k}{h}\right)+K\left(\frac{x-Y_k}{h}\right)\right].\]
\end{lemma}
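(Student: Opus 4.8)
The plan is to verify the identity by direct computation, exploiting the product structure of $K_2$ and the normalization $\int_{-\infty}^\infty K(u)\,du = 1$ that defines a kernel. Since both estimators are \emph{finite} sums, there are no convergence issues and the interchange of summation and integration is immediate; I would therefore state the unit-mass property of $K$ explicitly at the outset, as it is the single assumption driving the entire argument.

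For the non-symmetric case, I would substitute the definition of $\hat f(x,y)$ into the right-hand side of \eqref{equa2} and use linearity to move the integral inside the finite sum, pulling out the factor that depends only on $x$:
\[
\int_{-\infty}^\infty \hat f(x,y)\,dy = \frac{1}{Nh^2}\sum_{k=1}^N K\left(\frac{x-X_k}{h}\right)\int_{-\infty}^\infty K\left(\frac{y-Y_k}{h}\right)\,dy.
\]
The key step is to evaluate the remaining one-dimensional integral by the change of variables $u = (y-Y_k)/h$, which gives $\int_{-\infty}^\infty K((y-Y_k)/h)\,dy = h\int_{-\infty}^\infty K(u)\,du = h$. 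Substituting this back cancels one factor of $h$ and yields $\frac{1}{Nh}\sum_{k} K((x-X_k)/h) = \hat f_x(x)$, as claimed.

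For the symmetric case, I would apply the same argument term by term to the two summands of the symmetrized estimator. The first summand integrates exactly as above to $\frac{1}{2Nh}\sum_{k} K((x-X_k)/h)$. For the second summand the roles of $X_k$ and $Y_k$ are swapped, so the factor surviving the $y$-integration is $K((x-Y_k)/h)$, while $\int_{-\infty}^\infty K((y-X_k)/h)\,dy = h$ by the same substitution; this contributes $\frac{1}{2Nh}\sum_{k} K((x-Y_k)/h)$. Adding the two pieces recovers the symmetric marginal estimator $\hat f_x(x) = \frac{1}{2Nh}\sum_{k}\left[K((x-X_k)/h)+K((x-Y_k)/h)\right]$.

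There is no genuine obstacle here: the lemma is a purely algebraic consequence of the product form of $K_2$ together with the unit mass of $K$. The only point meriting a word of care is that the integration runs over all of $\mathbb{R}$ rather than over $[0,1]$, so that the substitution produces exactly $\int K = 1$ with no boundary truncation; this is consistent with the kernel estimators being defined on all of $\mathbb{R}^2$ even though the underlying density is supported on the unit square.
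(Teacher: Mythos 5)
Your proof is correct and follows essentially the same route as the paper's: substitute the product form of $K_2$, exchange the integral with the finite sum, and evaluate $\int_{-\infty}^\infty K((y-Y_k)/h)\,dy=h$ by substitution, then repeat term by term for the symmetrized estimator. Your explicit remark that the integration is over all of $\mathbb{R}$ (so no boundary truncation occurs) is a small but worthwhile clarification that the paper's proof glosses over.
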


\begin{algorithm}
\caption{Kernel Sinkhorn-Knopp (KSK)}\label{Alg::KSK}
\begin{algorithmic}[1]
    \Require Observations $S=\left\{\left(X_k,Y_k,c_k\right)\right\}_{k=1}^N$; tolerance $\varepsilon$
    \Ensure Balancing function $g$.
    \State $g^{(0)}\left(x\right)\gets1$, $w^{(0)}_k\gets1$
    \State Use Equation \eqref{Eq::KSKInMarg} to compute a kernel density estimate $r^{(0)}(x)$
    \While{$\sup_{0\leq x\leq1}\left\{\left\lvert r(x)-1\right\rvert\right\}<\varepsilon$}
        \State $g^{(t+1)}\left(x\right)\gets g^{(t)}(x)r^{(t)}(x)$
        \State $w^{(t+1)}_k\gets1/\sqrt{g\left(X_k\right)g\left(Y_k\right)}$
        \State Use Equation \eqref{Eq::KSKUpMarg} to compute a new kernel density estimate $r^{(t+1)}(x)$
    \EndWhile
    \State \Return$g(x)$
\end{algorithmic}
\end{algorithm}

Algorithm \ref{Alg::KSK} can be applied to a pre-binned sample $\{(X_k,Y_k,c_k)\}_{k=1}^N$ or a regular sample $\{(X_k,Y_k)\}_{k=1}^N$ as a special case with $c_k=1$ for all $k$.
For a pre-binned sample $\{(X_k,Y_k,c_k)\}_{k=1}^N$ from a symmetric density, the $2$D kernel density estimate is
 \begin{equation}\label{equa3}
     \hat f(x,y)=\frac{1}{2Nh^2}\sum_{k=1}^N c_k\left[K_2\left(\frac{x-X_k}{h},\frac{y-Y_k}{h}\right)+K_2\left(\frac{x-Y_k}{h},\frac{y-X_k}{h}\right)\right].   
    \end{equation}
By Lemma \ref{lemma1}, its marginal is the $1$D kernel density estimate given below 
\begin{equation}
\label{Eq::KSKInMarg}
	r(x)=\frac{1}{2Nh}\sum_{k=1}^N c_{k} \left[K\left(\frac{x-X_k}{h}\right)+K\left(\frac{x-Y_k}{h}\right)\right].
\end{equation}

Following the strategy of Algorithm \ref{Alg::CSSK}, in the first iteration of a balancing procedure, we would calculate
\[f^{(1)}(x,y)=\hat f(x,y)/\sqrt{r(x)r(y)},\]
and evaluate its marginal density by integration. This is equivalent to assigning a weight $w_k=1/\sqrt{r(X_k)r(Y_k)}$ for each observation, where $f^{(1)}$ is the new density estimate under the weight assignment, and its marginal density is formulated as

\begin{equation}
\label{Eq::KSKUpMarg}
	r(x)=\frac{1}{2N}\sum_{k=1}^N\frac{c_kw_{k}}{h}\left[K\left(\frac{x-X_k}{h}\right)+K\left(\frac{x-Y_k}{h}\right)\right].
\end{equation}
By continuing the iterations, we arrive at Algorithm \ref{Alg::KSK}, which requires only one-dimensional kernel density estimation with updated weight assignments.

\subsection{Bandwidth Selection}
Bin width or bandwidth selection is crucial in density estimation. We propose a two-fold cross-validation (CV) for bandwidth selection in Algorithm $\ref{Alg::KSK}$, illustrated below. 

\begin{algorithm}
\caption{Bandwidth Selection}\label{Alg::KSKCV} 
\begin{algorithmic}[1]
    \Require A random sample $S=\left\{\left(X_k,Y_k)\right)\right\}_{k=1}^N$; a set $\mathcal{H}$ of candidate bandwidths 
    \Ensure Selected bandwidth for running Algorithm \ref{Alg::KSK}  
    \State Randomly split $S$ into two disjoint subset $S_1$ and $S_2$
    \State For each bandwidth $h\in\mathcal{H},$ run Algorithm \ref{Alg::KSK} to obtain $g_1$ and $g_2$, respectively
    \State Apply $g_1$ to $S_2$ and $g_2$ to $S_1$ and compute an evaluation score for fold.
    \State Select the bandwidth with the best average evaluation score 
\end{algorithmic}
\end{algorithm}

In Step 3 of Algorithm \ref{Alg::KSKCV}, we evaluate how well the bias function $g_1$ learned from $S_1$ balances $S_2$, and vice versa. From the perspective of Algorithm \ref{Alg::KSK}, applying $g_1$ to balance $S_2$ is equivalent to assigning a weight $w_k = 1/\sqrt{g(X_k)g(Y_k)}$ to each observation $(X_k, Y_k) \in S_2$. Evaluating the balancing performance then amounts to comparing the marginal distributions of the weighted sample against the standard uniform distribution. Therefore, we compute the empirical CDF of the weighted sample for comparison. While various discrepancy measures could be used, we adopt the Cram\'{e}r-von Mises criterion \citep{anderson1952asymptotic} to assess the deviation between the empirical CDF and the CDF of the uniform distribution.
When conducting the CV procedure, if the data are  pre-binned, as is the case for Hi-C data, the bin counts are treated as repeated observations, and the data are split accordingly.

The same strategy can be applied for bin width selection in histogram-based methods. We illustrate the CV procedure in Algorithm \ref{Alg::SSKCV}. For each possible number of bins $B$, the data are randomly split and binned to create $B\times B$ matrices. For each fold, an evaluation score is computed to determine the quality of the estimator. While several reasonable evaluation scores are possible, we choose cosine similarity due to its computational simplicity. 

\begin{algorithm}
\caption{Bin Size Selection}\label{Alg::SSKCV}
\begin{algorithmic}[1]
   \Require Observations $S=\left\{\left(X_k,Y_k\right)\right\}_{k=1}^N$; A set $\mathcal{B}\subset\mathbb{N}$ of candidate numbers of bins in $[0,1]$
   \Ensure Bin size for Algorithm \ref{Alg::SSK}.
   \State Randomly split $S$ into two disjoint subsets $S_1$ and $S_2$
   \For{each $B\in\mathcal{B}$}
       \State Split the interval $[0,1]$ into $B$ equally-sized sub-intervals 
       \State Bin $S_1$ and $S_2$ to generate two matrices $C_1$ and $C_2$
       \State Use Algorithm \ref{Alg::SSK} on $C_1$ and $C_2$ to obtain $D_1$ and $D_2$, respectively. 
       \State $r_1\gets D_1C_2D_1e$, $r_2\gets D_2C_1D_2e$. 
       \State Compute a score for each marginal, $s_i\gets\frac{r_i\cdot e}{\left\lVert r_i\right\rVert\left\lVert e\right\rVert}$
       \State $s\gets \frac{s_1+s_2}{2}$
   \EndFor
   \State Select the bin size with the best average evaluation score
\end{algorithmic}
\end{algorithm}

\section{Theoretical Results}
\label{Section::TR}
In this section, we view the inputs of Algorithms \ref{Alg::SSK} and \ref{Alg::KSK} as estimators to the distorted density function and discuss the statistical properties of the outputs of balancing algorithms as estimators to the underlying SDSD. We first note that both algorithms do converge. With Theorems \ref{Th::SK} and \ref{Th::KSK}, this guarantees that their outputs are the unique solutions to the corresponding balancing problems. We note that Algorithm \ref{Alg::SSK} has been previously studied and its convergence shown \citep{KnightRuiz14}. However, we present an alternate proof that is more easily generalized to the continuous case and present it for completeness.

\begin{theorem}\label{Th::CSSKConv}
    Let $f$ be a bounded, continuous, strictly positive, symmetric density on $[0,1]^2$, with unique functions $f^*$ and $g$ satisfying Equation \eqref{Eq::Cbal}. Algorithm \ref{Alg::CSSK} converges to $f^*$ and $\frac{1}{g}$. Let $C$ be a strictly positive symmetric matrix with unique matrices $P$ and $D$ satisfying the matrix balancing equation. Algorithm \ref{Alg::SSK} converges to $P$ and $D$.
\end{theorem}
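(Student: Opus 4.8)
The plan is to reduce both assertions to a single statement about the symmetric scaling iteration and then to exhibit a strictly convex potential that the iteration monotonically decreases. Writing $s=\log a$ and reparametrizing the continuous iterates as $p^{(t)}(x,y)=a_t(x)f(x,y)a_t(y)$ with $a_0\equiv 1$, the update in Algorithm~\ref{Alg::CSSK} becomes $a_{t+1}=a_t/\sqrt{r_t}$, where $r_t(x)=\int_0^1 p^{(t)}(x,y)\,dy=a_t(x)\int_0^1 f(x,y)a_t(y)\,dy$ is the current marginal, and the returned function $h$ is, up to indexing, exactly $a_t$. Since existence and uniqueness of $f^*$ and $g$ are given, it suffices to prove $a_t\to 1/g$, for then $p^{(t)}=a_tfa_t\to \tfrac1g\,(gf^*g)\,\tfrac1g=f^*$; conversely any balanced limit $a$ makes $(ag)f^*(ag)$ an SDSD, which by uniqueness forces $ag\equiv1$. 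The discrete claim for Algorithm~\ref{Alg::SSK} is the verbatim analogue with $P^{(t)}=D^{(t)}CD^{(t)}$, $D^{(t)}=\diag(d_t)$, and sums replacing integrals, and is handled by the same argument.

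Next I would introduce the potential
\[
F(s)=\int_0^1\!\!\int_0^1 e^{s(x)+s(y)}f(x,y)\,dx\,dy-2\int_0^1 s(x)\,dx ,
\]
whose first variation is $2\bigl(r(x)-1\bigr)$; hence its stationary points are exactly the balancing solutions. Because $f>0$, the first (mass) term $G(s)=\int_0^1\!\int_0^1 e^{s(x)+s(y)}f$ is convex, and the quadratic form $\int_0^1\!\int_0^1\bigl(\phi(x)+\phi(y)\bigr)^2 e^{s(x)+s(y)}f\,dx\,dy$ of its second variation vanishes only for $\phi\equiv0$; hence $F$ is strictly convex, and by stationarity the balancing solution is its unique global minimizer. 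This recasts the algorithm as a descent scheme, with the finite value $F(s^\ast)$ at the true balancer serving as a lower bound.

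The heart of the argument is to show the iteration decreases $F$. The update gives $F(s_{t+1})-F(s_t)=G(s_{t+1})-\int_0^1 r_t\,dx+\int_0^1\log r_t\,dx$, using $G(s_t)=\int_0^1 r_t\,dx$. By the arithmetic--geometric mean inequality $\tfrac{1}{\sqrt{r_t(x)r_t(y)}}\le\tfrac12\bigl(\tfrac1{r_t(x)}+\tfrac1{r_t(y)}\bigr)$, and since $\int_0^1\!\int_0^1 p^{(t)}(x,y)/r_t(x)\,dx\,dy=1$, one gets $G(s_{t+1})\le1$. Combining this with the elementary bound $\log u\le u-1$ yields
\[
F(s_{t+1})-F(s_t)\le\int_0^1\bigl(1+\log r_t(x)-r_t(x)\bigr)\,dx\le 0 ,
\]
with equality only when $r_t\equiv1$, i.e. at the balanced state. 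Telescoping the nonnegative decrements $\int_0^1\bigl(r_t-1-\log r_t\bigr)\,dx$ against the bounded total decrease of $F$ forces $\int_0^1(r_t-1-\log r_t)\,dx\to0$, hence $r_t\to1$: the marginals become uniform.

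It remains to upgrade this to convergence of $a_t$ itself, which I expect to be the main obstacle, especially in the continuous case. In finite dimensions the combination of strict convexity, coercivity of $F$ (guaranteed because the strictly positive entries of $C$ make the exponential term dominate), and monotone descent with the above equality characterization already pins $d_t$ down to the unique balancing vector, so Algorithm~\ref{Alg::SSK} converges. For Algorithm~\ref{Alg::CSSK} I would use that $f$ is continuous and strictly positive, hence $0<m\le f\le M$, to derive a priori uniform upper and lower bounds on $a_t$ and, via the uniform continuity of $f$, equicontinuity of $x\mapsto\int_0^1 f(x,y)a_t(y)\,dy$; Arzel\`a--Ascoli then extracts a uniformly convergent subsequence. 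Any limit point has uniform marginals by the previous step, so it is a balanced scaling and therefore equals $1/g$ by uniqueness; since all limit points coincide, $a_t\to 1/g$ uniformly and $p^{(t)}\to f^*$. Making the compactness and a priori bound estimates precise is the delicate part; the algebraic monotonicity is routine.
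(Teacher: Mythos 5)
Your argument is essentially correct, but it takes a genuinely different route from the paper. The paper works in nonlinear Perron--Frobenius theory: it writes one step of Algorithm~\ref{Alg::SSK} as the operator $T_Cx=\sqrt{x/(Cx)}$, identifies the balancing vector with a fixed point of $T_C$, and invokes Birkhoff's theorem to show $T_C$ is a contraction in Hilbert's projective metric (the continuous case is handled by citing the corresponding contraction result for positive integral operators). You instead exhibit the strictly convex potential $F(s)=\int\!\!\int e^{s(x)+s(y)}f-2\int s$, whose stationary points are exactly the balancers, verify via the AM--GM inequality and $\log u\le u-1$ that the symmetric update strictly decreases $F$ unless $r_t\equiv1$, and telescope to get $r_t\to1$; convergence of $a_t$ itself then follows from a priori bounds, Arzel\`a--Ascoli, and uniqueness of the balancer. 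Your AM--GM step is a nice touch, since the simultaneous (symmetric) update is not an exact alternating minimization of $F$, so the descent property genuinely needs checking and you check it correctly. What each approach buys: the contraction argument, when it goes through, yields a geometric convergence rate and handles uniqueness in the same stroke, but it leans on Birkhoff's theorem and on a uniform contraction constant (which requires some care to extract); your descent argument is more elementary and self-contained, transfers verbatim between the discrete and continuous settings, but only gives convergence without a rate and genuinely requires the compactness/equicontinuity step you flag --- without the uniform upper and lower bounds on $a_t$ (obtainable from $0<m\le f\le M$ as you indicate), the passage from $r_t\to1$ to $a_t\to1/g$ would fail, so that step should be written out rather than left as a remark.
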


Given a set of observations, a kernel density estimator constructed using kernel with infinite support is necessarily positive. This fact leads directly to the corollary below. 

\begin{corollary}\label{Th::KSKConv}
    Let $S=\left\{\left(X_k,Y_k)\right)\right\}_{k=1}^N$ be $N$ IID observations from a density on $[0,1]^2$ and $\widehat{f}$ be a kernel density estimator constructed using a kernel with infinite support. Let $\widehat{f}^*$ and $\widehat{g}$ be the unique functions that satisfy Equation \eqref{Eq::Cbal}. Algorithm \ref{Alg::KSK} converges to $\widehat{f}^*$ and $\frac{1}{\widehat{g}}$. 
\end{corollary}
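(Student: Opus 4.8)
The plan is to deduce the statement from Theorem~\ref{Th::CSSKConv} by verifying that the symmetrized kernel density estimator $\widehat f$ satisfies all of that theorem's hypotheses, and then to recognize Algorithm~\ref{Alg::KSK} as a computationally streamlined realization of Algorithm~\ref{Alg::CSSK} run on $\widehat f$. The only genuinely new ingredient, as flagged before the statement, is strict positivity: since $K$ has infinite support, $K(u)>0$ for every finite argument, so for all $(x,y)\in[0,1]^2$ every summand of the symmetric estimator in Lemma~\ref{lemma1} is strictly positive. Hence $\widehat f>0$ on the compact square, and by continuity it is in fact bounded away from zero there.

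First I would dispatch the remaining hypotheses. Symmetry is immediate from the symmetrized form in Lemma~\ref{lemma1}, whose two kernel terms swap under $(x,y)\mapsto(y,x)$. Continuity and boundedness are inherited from $K$, as $\widehat f$ is a finite combination of continuous bounded translates and dilates of $K$; on the compact square continuity already yields boundedness. I would restrict $\widehat f$ to $[0,1]^2$ and observe that the continuous balancing theory uses only positivity, continuity, boundedness, and symmetry, not exact normalization, so the mass that an infinite-support kernel leaks outside the square is harmless. With these checks in place, Theorem~\ref{Th::CSSKConv} yields existence and uniqueness of the pair $(\widehat f^*,\widehat g)$ solving Equation~\eqref{Eq::Cbal} with $f=\widehat f$, and guarantees that Algorithm~\ref{Alg::CSSK} applied to $\widehat f$ converges to $\widehat f^*$ and $1/\widehat g$.

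It then remains to match the iterates of Algorithm~\ref{Alg::KSK} to those of Algorithm~\ref{Alg::CSSK} on $\widehat f$, which is where Lemma~\ref{lemma1} is indispensable. Equation~\eqref{equa2} identifies the marginal of the two-dimensional estimate with the one-dimensional estimate of Equation~\eqref{Eq::KSKInMarg}, so both algorithms start from the same marginal $r^{(0)}$. Setting $h^{(t)}=1/\sqrt{g^{(t)}}$, the accumulation $g^{(t+1)}=g^{(t)}r^{(t)}$ mirrors the Algorithm~\ref{Alg::CSSK} update $h^{(t+1)}=h^{(t)}/\sqrt{r^{(t)}}$, and the per-observation weight $w_k=1/\sqrt{g(X_k)g(Y_k)}=h(X_k)h(Y_k)$ is the sample-point version of the rescaling factor $h(x)h(y)$. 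Feeding these weights through Lemma~\ref{lemma1} reproduces the marginal formula in Equation~\eqref{Eq::KSKUpMarg}, so inductively the two algorithms should generate the same marginals and balancing functions and must then share their limit.

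The main obstacle I expect is precisely this identification: establishing rigorously that replacing the continuous rescaling $p^{(t)}/\sqrt{r(x)r(y)}$ by the discrete reweighting leaves the marginal updates unchanged, i.e., that integrating the reweighted estimate via Lemma~\ref{lemma1} returns exactly the marginal of the rescaled density. This is the single step where the kernel-estimation structure, rather than the abstract balancing theory, carries the argument, and it requires the inductive bookkeeping to be pinned down carefully; two minor technical points to reconcile along the way are the mismatch between integrating the marginal over $\mathbb{R}$ in Lemma~\ref{lemma1} versus over $[0,1]$ in the balancing problem, and the nonnormalization of the restricted estimate. Once the identification is secured, uniqueness of $(\widehat f^*,\widehat g)$ transfers the convergence conclusion from Algorithm~\ref{Alg::CSSK} to Algorithm~\ref{Alg::KSK}.
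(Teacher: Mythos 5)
Your proposal is correct and follows the same route as the paper, which justifies the corollary in a single line: a kernel with infinite support makes $\widehat{f}$ strictly positive (and it is continuous, bounded, and symmetric by construction), so Theorem~\ref{Th::CSSKConv} applies, with the identification of Algorithm~\ref{Alg::KSK} as Algorithm~\ref{Alg::CSSK} run on $\widehat{f}$ already established via Lemma~\ref{lemma1} and Equations~\eqref{Eq::KSKInMarg}--\eqref{Eq::KSKUpMarg} in Section~4.1. Your write-up is simply a more explicit version of that argument, and the two technical caveats you flag (integration over $\mathbb{R}$ versus $[0,1]$, and non-normalization of the restricted estimate) are real but are glossed over by the paper as well.
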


These results highlight the first advantage of Algorithm $\ref{Alg::KSK}$ over Algorithm $\ref{Alg::SSK}$. The discrete algorithm requires strict positivity of the input to guarantee convergence, while the continuous algorithm will always converge, provided we use a kernel with infinite support. The Theorem \ref{Th::CSSKConv} follows a commonly used approach to study the matrix balancing problem \citep{BPS66}. The main strategy is then to view a matrix balancing algorithm as an operator acting on the cone of positive vectors. The theorem can then be proven by showing (1) the solution to the matrix balancing problem corresponds to a fixed point of this operator and (2) the operator is a contraction mapping.

Our main result characterizes the consistency of Algorithm \ref{Alg::KSK} and provides a bound on the error of the output in terms of the error of the input.

\begin{theorem}
\label{Th::FinResult}
	Assume the setting of Theorem~\ref{Th::CSSKConv}, and further assume that $f$ is twice differentiable with bounded, continuous, and square integrable partial derivatives. Let $S=\left\{\left(X_k,Y_k)\right)\right\}_{k=1}^n$ be $n$ IID samples from $f$. Suppose $\widehat{f}$ is a histogram estimator of $f$ based on $S$ using the theoretically optimal bin size, and let $\widehat{g}$ be obtained using Algorithm~$\ref{Alg::SSK}$. Then $\left\lVert\widehat{g}-g\right\rVert_2$ is $O_P(n^{-1/4})$. Similarly, suppose $\widehat{f}$ is a kernel density estimator of $f$ based on $S$ using a symmetric Gaussian kernel with the theoretically optimal bandwidth, and let $\widehat{g}$ be obtained using Algorithm~$\ref{Alg::KSK}$. Then $\left\lVert\widehat{g}-g\right\rVert_2$ is $O_P(n^{-1/3})$.
\end{theorem}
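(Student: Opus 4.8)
The plan is to separate the claim into two nearly independent ingredients: (i) the $L^2$ accuracy of the input density estimator $\hat f$, which is classical nonparametrics, and (ii) a deterministic stability bound showing that the balancing map $\Phi:f\mapsto g$ is Lipschitz in $L^2$ in a neighborhood of the truth. Both exponents $n^{-1/4}$ and $n^{-1/3}$ then come entirely from ingredient (i), while ingredient (ii) supplies a single estimator-agnostic constant, and the theorem follows by composition.

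For ingredient (i) I would invoke the standard mean integrated squared error (MISE) calculation on $[0,1]^2$ \citep{wand1994kernel}. Under the stated smoothness---$f$ twice differentiable with bounded, square-integrable partial derivatives---a two-dimensional histogram has integrated squared bias $\asymp h^2$ and integrated variance $\asymp (nh^2)^{-1}$, so the optimal bin width $h\asymp n^{-1/4}$ gives $\mathrm{MISE}\asymp n^{-1/2}$; a two-dimensional Gaussian kernel estimator has integrated squared bias $\asymp h^4$ and the same variance order, so the optimal bandwidth $h\asymp n^{-1/6}$ gives $\mathrm{MISE}\asymp n^{-2/3}$. Since $\mathrm{MISE}=\mathbb{E}\|\hat f-f\|_2^2$, Markov's inequality yields $\|\hat f-f\|_2=O_P(n^{-1/4})$ for the histogram and $O_P(n^{-1/3})$ for the kernel estimator.

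For ingredient (ii)---the crux---I would write the balancing equation for $h=1/g$ as $h(x)\int_0^1 f(x,y)h(y)\,dy=1$, whose unique positive solution is the fixed point of the continuous iteration of Algorithm~\ref{Alg::CSSK} guaranteed by Theorem~\ref{Th::CSSKConv}. Linearizing this identity at the truth with perturbations $\delta f,\delta h$ and relative perturbation $u=\delta h/h$, and using $f^*(x,y)=h(x)f(x,y)h(y)$ together with the marginal identity $\int_0^1 f^*(x,y)\,dy=1$, gives
\[
u(x)+\int_0^1 f^*(x,y)\,u(y)\,dy=-\,h(x)\int_0^1 \delta f(x,y)\,h(y)\,dy .
\]
The integral operator $\mathcal{F}^*$ with kernel $f^*$ is self-adjoint (by symmetry) and doubly stochastic, hence Hilbert--Schmidt with $\|\mathcal{F}^*\|_{\mathrm{op}}\le 1$ and spectrum in $[-1,1]$, the top eigenvalue $1$ being attained by the constants. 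Because $f^*=f/(g\otimes g)$ is strictly positive, Jentzsch's theorem (Perron--Frobenius for positive integral kernels) makes this eigenvalue simple and excludes the eigenvalue $-1$; compactness forces the spectrum to accumulate only at $0$, so $I+\mathcal{F}^*$ is boundedly invertible. Solving for $u$, bounding the right-hand side by Cauchy--Schwarz, and passing from $h$ back to $g=1/h$---all of which stay bounded above and away from zero on the compact domain---yields the Lipschitz bound $\|\hat g-g\|_2\le C\,\|\hat f-f\|_2$ with $C$ depending only on $f$.

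The main obstacle is precisely the bounded invertibility of $I+\mathcal{F}^*$, i.e.\ the quantitative form of the uniqueness in Theorem~\ref{Th::CSSKConv}, together with controlling the Fréchet remainder so that the linear bound is valid to leading order in a genuine $L^2$-neighborhood rather than only in the projective metric natural to the contraction analysis of \citep{BPS66}. A secondary, bookkeeping obstacle is ensuring that $\hat f$ actually lands in this bounded-positive neighborhood with probability tending to one: this is automatic for the Gaussian kernel estimator, which is strictly positive by Corollary~\ref{Th::KSKConv}, but for the histogram it requires ruling out empty bins on the support, which holds for strictly positive $f$ at the optimal bin width once $n$ is large. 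On that event of probability $1-o(1)$ the deterministic Lipschitz bound applies, and combining it with ingredient (i) gives the two stated rates.
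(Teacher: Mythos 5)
Your decomposition---(i) classical MISE rates for the input density estimator, (ii) an $L^2$-Lipschitz stability bound for the map $f\mapsto g$---is exactly the structure of the paper's argument: the paper cites the known optimal MISE rates ($n^{-1/2}$ for the 2D histogram, $n^{-2/3}$ for the 2D second-order kernel estimator, hence $n^{-1/4}$ and $n^{-1/3}$ in $L^2$) and delegates the stability step to its Theorem~\ref{Th::KSKResult}, which asserts $\lVert g-\widehat g\rVert_2\le M^2\varepsilon$ whenever $\lVert f-\widehat f\rVert_2<\varepsilon$. Where you genuinely diverge is in how that stability is established, and your route is the more solid one. The paper bounds $\lVert T_{\widehat f}(\widehat g)-T_{\widehat f}(g)\rVert_2\le M^2\varepsilon$ for the forward map $T_{\widehat f}(g)=\int_0^1 g(x)\widehat f(x,y)g(y)\,dx$, computes its Fr\'echet derivative, and then argues that it ``suffices to show the operator norm of the Fr\'echet derivative is nonzero''; but a nonzero operator norm does not give bounded invertibility of the derivative, which is what Lipschitz continuity of $T_{\widehat f}^{-1}$ actually requires. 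Your linearization in the relative perturbation $u=\delta h/h$ identifies the relevant derivative, after conjugation, as $I+\mathcal{F}^*$ with $\mathcal{F}^*$ the self-adjoint, compact, doubly stochastic integral operator with kernel $f^*$, and your spectral argument (operator norm at most $1$ by Schur's test, the eigenvalue $-1$ excluded by Jentzsch's theorem for the strictly positive kernel, spectrum accumulating only at $0$) supplies precisely the quantitative bounded invertibility that the paper's proof asserts but does not establish. You also flag two points the paper passes over silently: control of the second-order remainder so that the linear bound is valid on an honest $L^2$-neighborhood, and the event that the histogram is strictly positive, which as you note holds with probability tending to one at bin width of order $n^{-1/4}$ since each bin then has expected count of order $\sqrt{n}$. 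In short, your proposal follows the paper's decomposition but replaces its key stability lemma with a sharper spectral argument that closes a real gap in the published proof; the price is that your Lipschitz constant depends on $f$ through the spectral gap of $\mathcal{F}^*$ rather than only through $\sup g$ and $\sup\widehat g$ as in the paper's statement.
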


This result highlights the second advantage of Algorithm $\ref{Alg::KSK}$ over Algorithm $\ref{Alg::SSK}$ by characterizing the rate of convergence of $\widehat{g}$ to $g$. It is based on Theorem \ref{Th::KSKResult}, the main technical result of this paper. Intuitively, Theorem \ref{Th::KSKResult} states that for any estimator $\widehat{f}$ that converges to $f$, $g$ converges to $\widehat{g}$ at the same rate. The convergence rates of histogram and kernel density estimates are well-known, in terms of optimal mean integrated squared error (MISE) \citep{Simonoff96, ChaconDuong18}, leading to the theorem. 

\section{Numerical Results} 
\label{section::NR}
\subsection{Simulations}
We compare the performances of Algorithms \ref{Alg::SSK} (SSK) and \ref{Alg::KSK} (KSK) on a toy example. 
We design a symmetric probability density on the unit square with uniform marginals as follows. 
\begin{enumerate}
    \item Start with a mixture model, \[\tilde{f}\left(x,y\right)\propto\mathcal{N}\left(\mu,\Sigma\right)+\exp\left(-\frac{(x-y)^2}{0.01}\right)\] where $\mu=(0.2,0.8)^{\top}$ and $\Sigma=0.1I$, $I$ is $2\times 2$ identity matrix. 
    \item Symmetrize and normalize $\tilde{f}$ to obtain a SDSD  $f^*(x,y)$.
\end{enumerate}

We use a bias function  $g(x)=\cos(10\pi x)+3.5$ to create a distorted density $f(x,y)\propto g(x)f^*(x,y)g(y)$. The SDSD $f^*$ and the distorted  density are shown in Figures \ref{Fig::UnbiasedSamp} and \ref{Fig::BiasedSamp}, respectively. 

\begin{figure} \label{Fig::SampDist}
    \centering
    \begin{subfigure}[b]{0.35\textwidth}
        \centering
        \includegraphics[width=\textwidth]{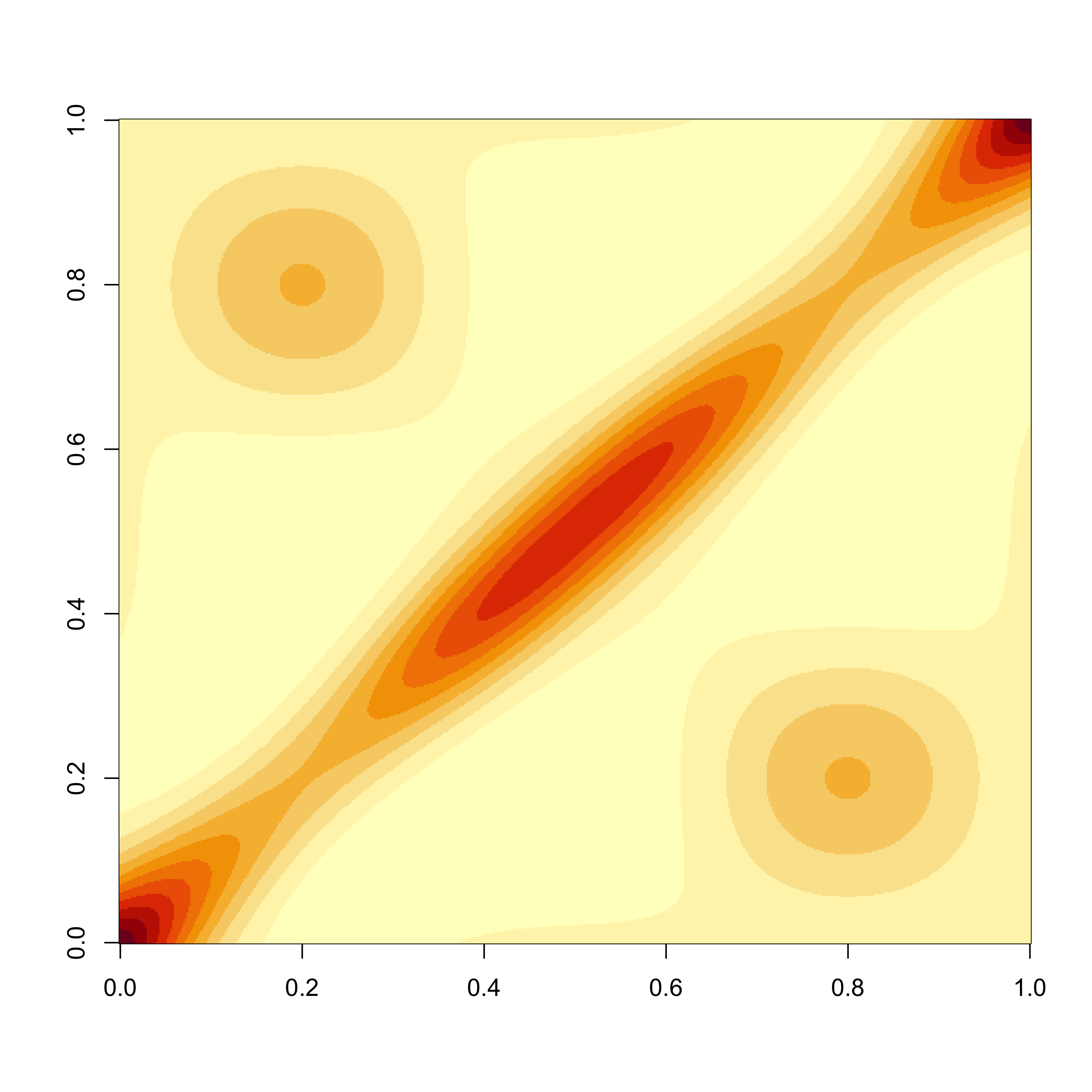}
        \caption{Unbiased density}
        \label{Fig::UnbiasedSamp}
    \end{subfigure}
    \begin{subfigure}[b]{0.35\textwidth}
        \centering
        \includegraphics[width=\textwidth]{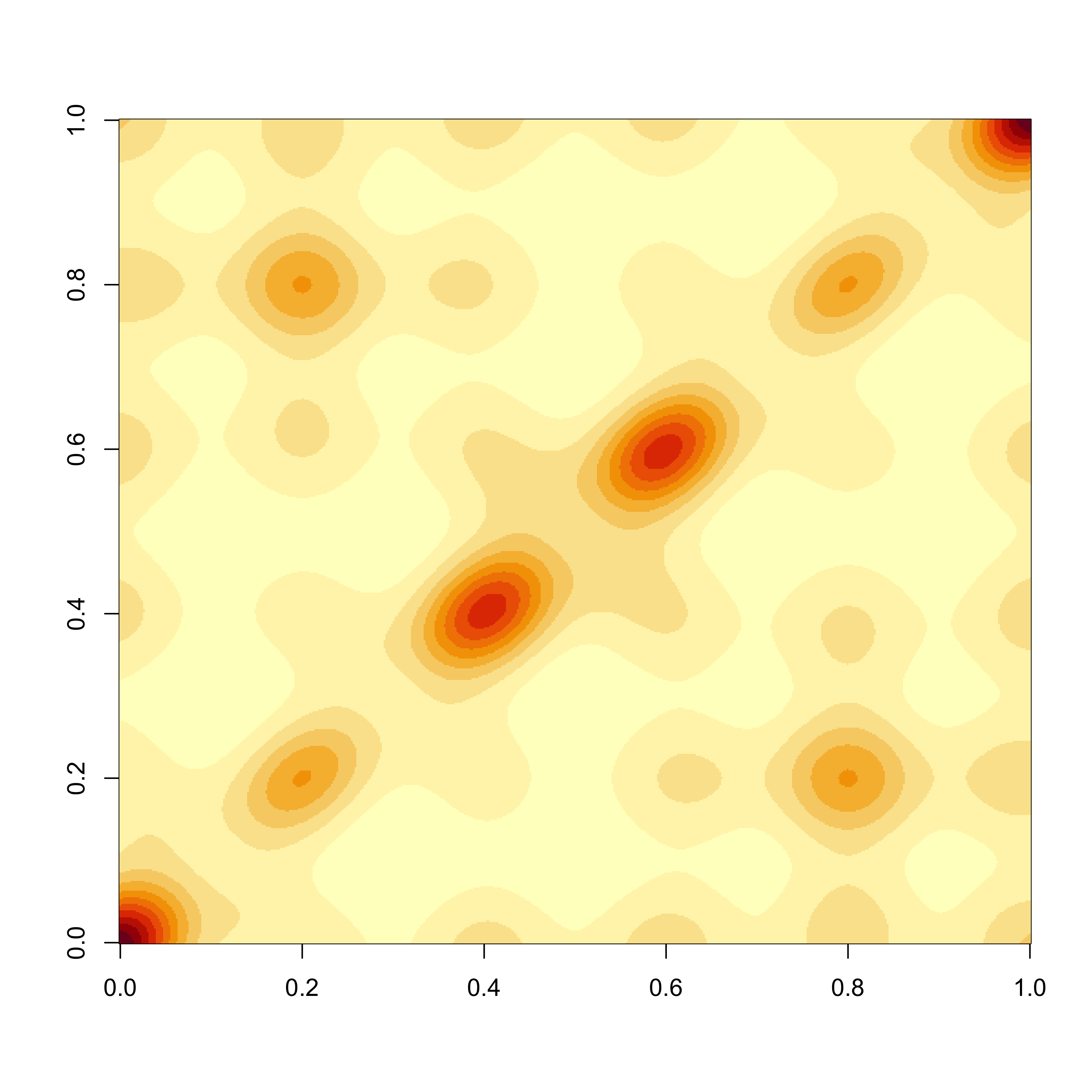}
        \caption{Biased density}
        \label{Fig::BiasedSamp}
    \end{subfigure}
    \caption{The image on the left shows the target density function, which has high values along the diagonal with a signal for interaction on the off-diagonal. The image on the right shows the function after biasing. In the biased data, several false peaks are visible which obscure the original signal.} 
\end{figure}

Figure \ref{Fig::Compare1} compares the performance of the SSK and KSK algorithms on one simulated dataset with about $65000$ observations. \ref{Fig::gEstimates} displays the true bias function as a solid black line. The KSK estimate is shown in blue and the SK estimate in red. In this example, the bandwidth selected was $h\approx0.03$ and the bin width selected was $1/109\approx0.009.$ In general, the bandwidths selected by KSK were larger than the corresponding bin widths selected by SK, highlighting the benefits of utilizing the spatial information inherent in the problem. 

Figures \ref{Fig::SKFin} and \ref{Fig::KSKFin} display the de-biased estimates of the balanced density $f^*$. The KSK estimate shows clearer evidence of a single peak on the off-diagonals location near the original signal.  In either case, the simulation demonstrates that under our statistical framework, the process estimating the bias function using a Sinkhorn-like algorithm, then using this estimate to recover the balanced density can work well. 

\begin{figure}
    \centering
    \begin{subfigure}{0.32\textwidth}
        \includegraphics[width=\linewidth]{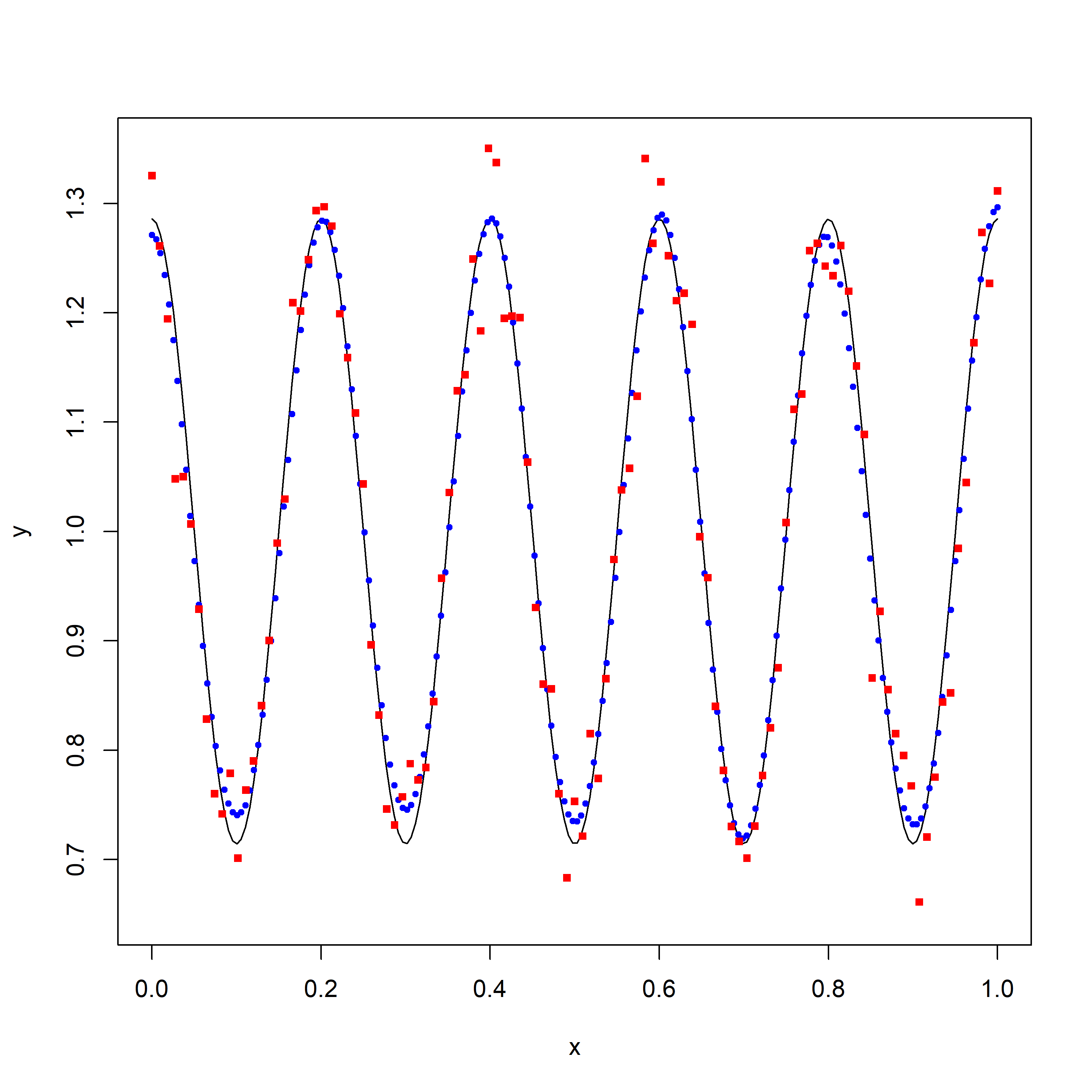}
        \caption{Estimates for Bias}
        \label{Fig::gEstimates}
    \end{subfigure}
    \begin{subfigure}[b]{0.32\textwidth}
        \centering
        \includegraphics[width=\textwidth]{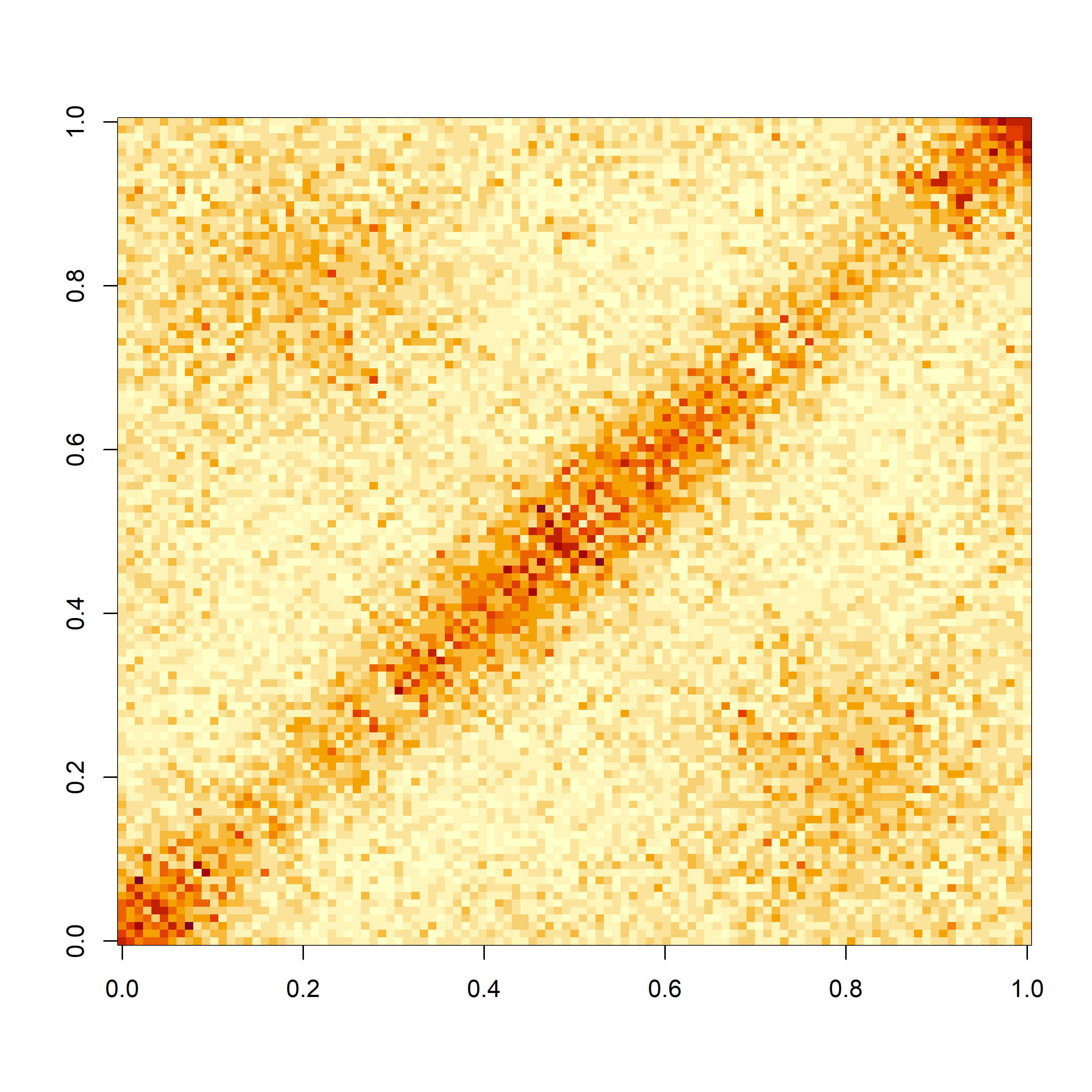}
        \caption{SK De-biased Estimate}
        \label{Fig::SKFin}
    \end{subfigure}
    \begin{subfigure}[b]{0.32\textwidth}
        \centering
        \includegraphics[width=\textwidth]{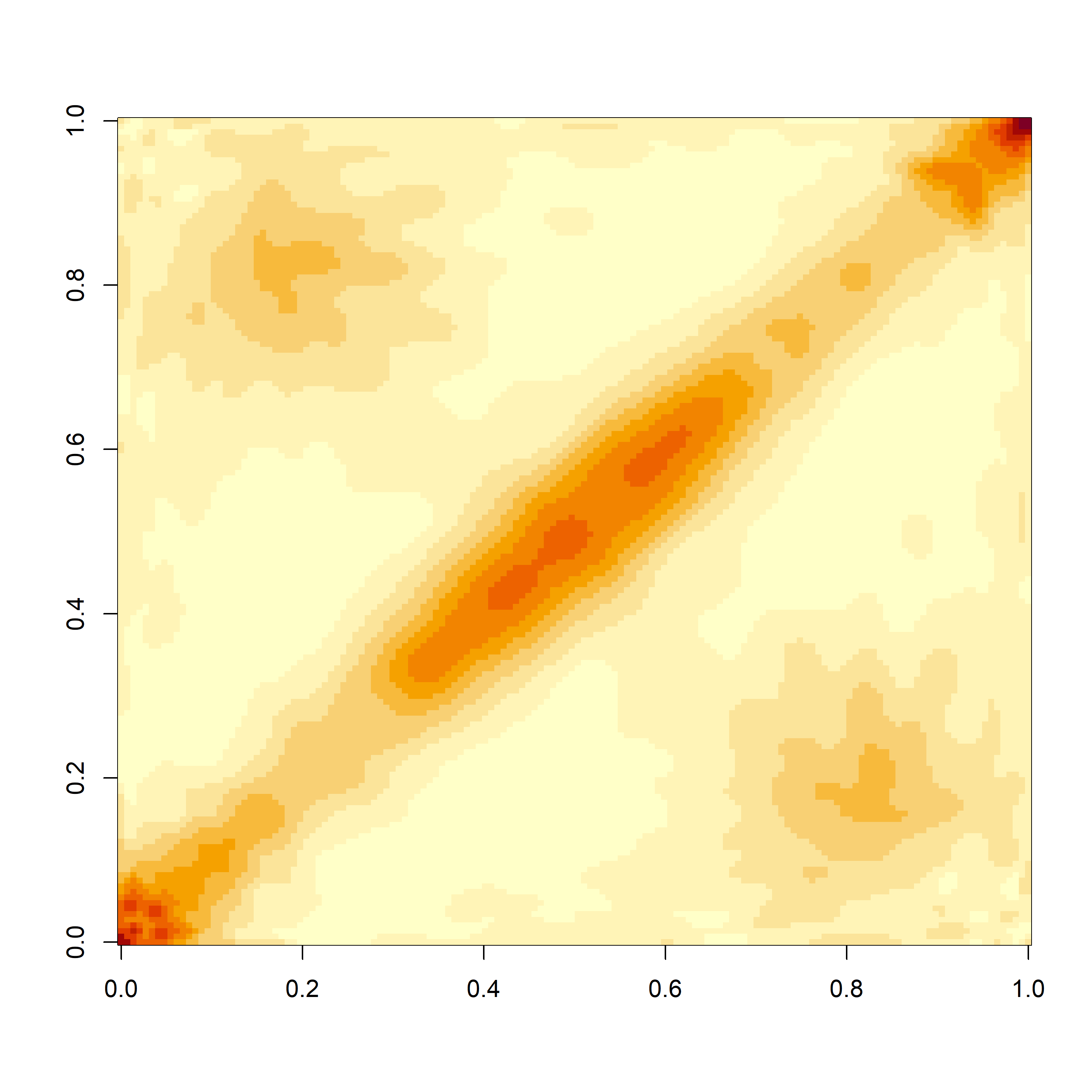}
        \caption{KSK De-Biased Estimate}
        \label{Fig::KSKFin}
    \end{subfigure}
    \caption{A comparison of the SK and KSK algorithms for one sample of about 60000 observations. (a) The true bias function, shown in black and the results of KSK (blue) and SSK (red). (b) and (c) The recovered signals obtained by applying the estimated bias functions to the original data, using the optimal bin size or bandwidth. The KSK estimate shows clearer visual evidence a peak near the original signal location, $(0.2,0.8).$}
	\label{Fig::Compare1}
\end{figure}

Figure \ref{Fig::Compare2} compares the overall performance the SK and KSK algorithms on many runs across a range of scales. In each subfigure, the KSK results are shown in blue and the SSK results are shown in red. Figure \ref{Fig::MSE} displays the average MISE over $30$ runs at fixed numbers of points. This was repeated for $13$ different sample sizes. Both algorithms produce better estimates of the bias function as the sample size increases, but the error from the KSK estimate decreases more quickly and is generally much smaller than the error from the SK estimate. Figure \ref{Fig::KSKFin} displays the average bandwidth size from the same runs ($1$ divided by the number of bins for the SK algorithm). Again as the sample size increases, the bandwidth decreases, but the SK ``bandwidth" is smaller than the KSK bandwidth, suggesting there is spatial data leveraged by the KSK algorithm that the SK algorithm excludes.  

Figure \ref{Fig::MSEvsCV} shows a comparison of the minimal MSE achieved and MSE obtained using cross-validation. The data shown consists of about $200$ runs for a range of $20000-75000$ observations. The line displayed is the line $x=y$, which is the lower boundary of any point displayed. The figure shows that the KSK algorithm consistently chooses a bandwidth that is closer to the ``optimal'' bandwidth when compared to the SSK algorithm. It also again shows that the KSK algorithm tends to have a lower MSE in general. The SSK algorithm also has higher variance, sometimes choosing bandwidths that are quite far from the optimal choice. 

\begin{figure}[htb]
    \centering
    \begin{subfigure}{0.32\textwidth}
        \includegraphics[width=\linewidth]{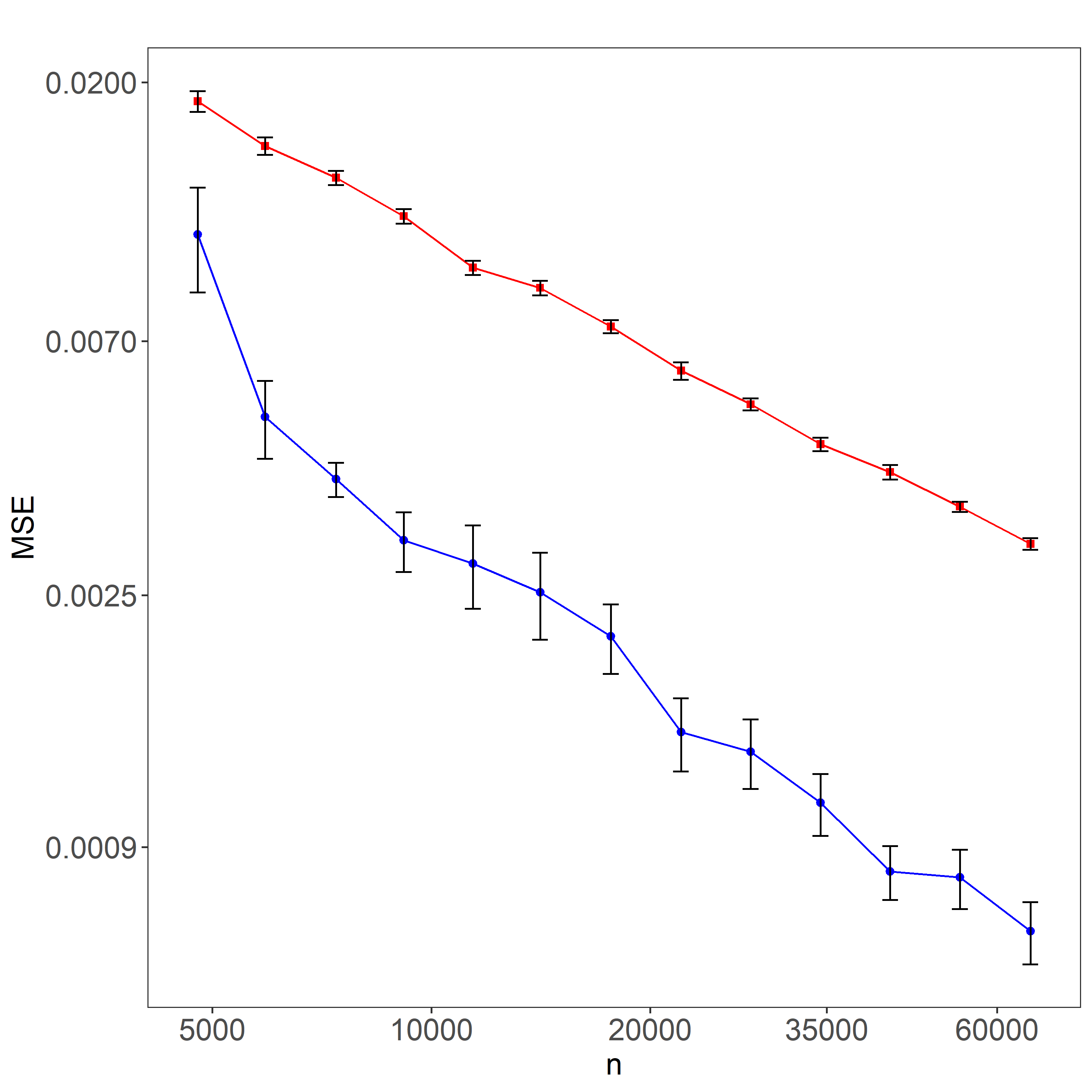}
        \caption{}
        \label{Fig::MSE}
    \end{subfigure}
        \begin{subfigure}{0.32\textwidth}
        \includegraphics[width=\linewidth]{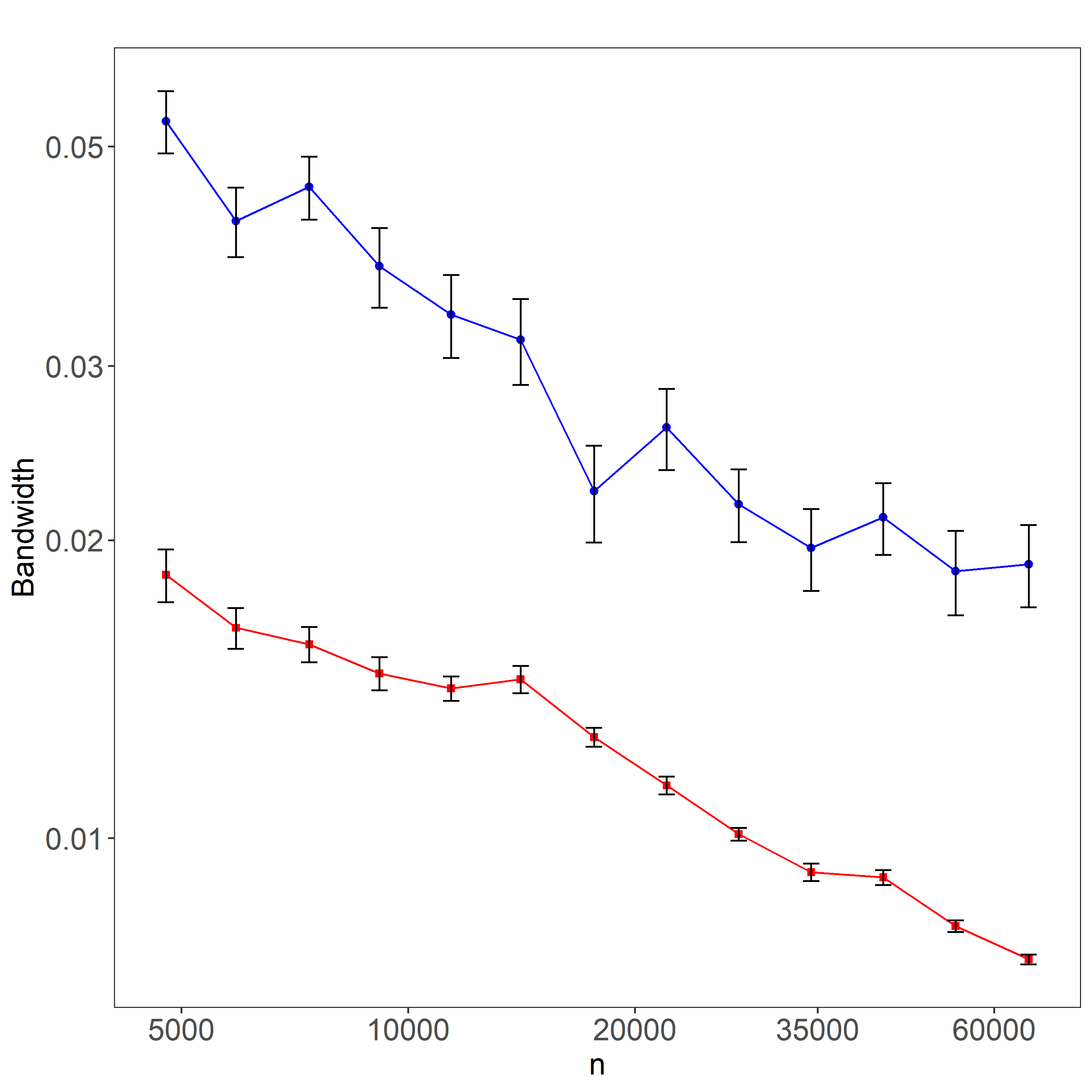}
        \caption{}
        \label{Fig::Bandwidth}
    \end{subfigure}
    \begin{subfigure}{0.32\textwidth}
        \includegraphics[width=\linewidth]{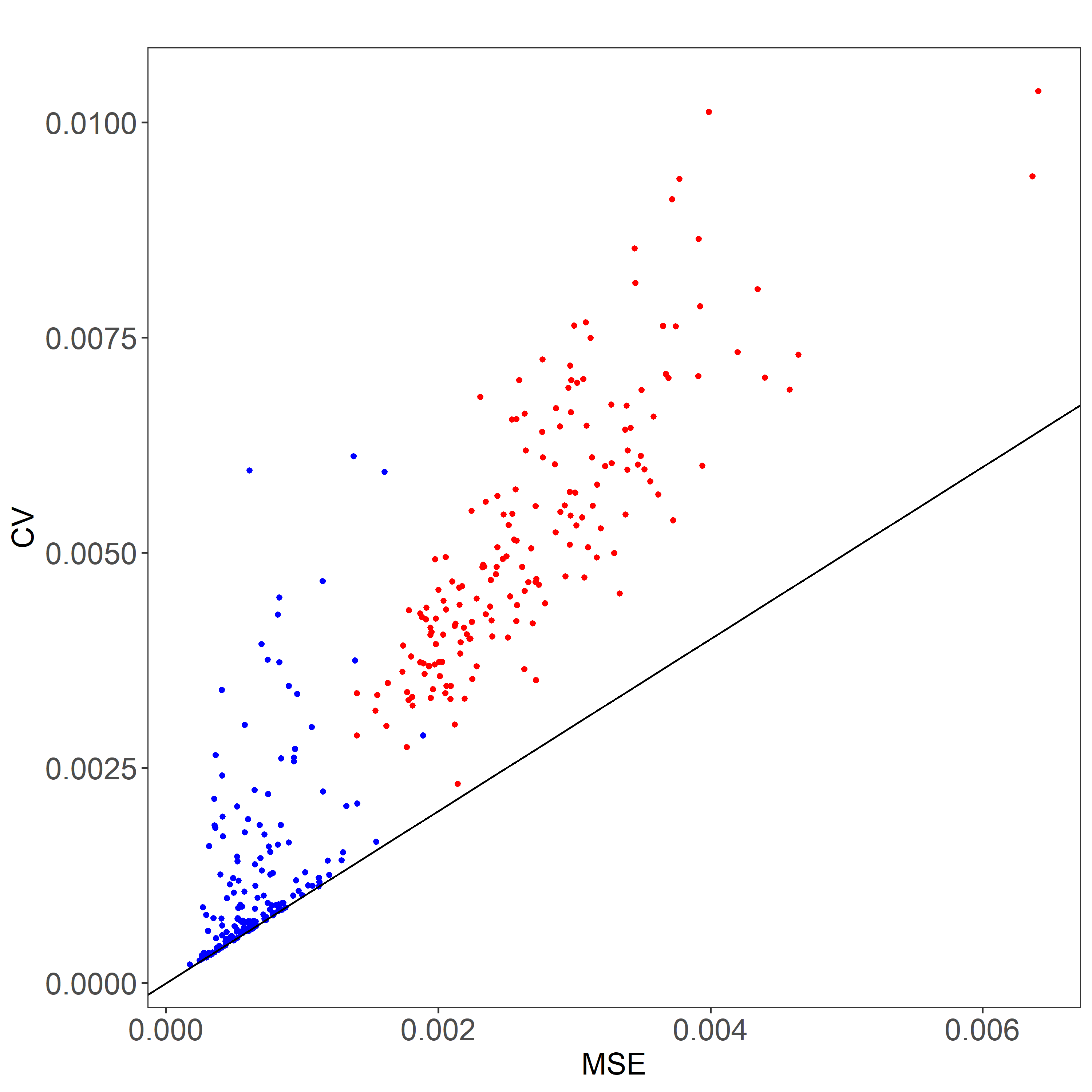}
        \caption{}
        \label{Fig::MSEvsCV}
    \end{subfigure}
    \caption{Results from simulated data for varying $n$. In each case, the results of KSK are shown in blue and results of SSK are shown in red. (a) The MISE of the estimator relative to the true function. (b) The bandwidth selected by the algorithm. For SSK, the inverse of the bin size is used. (c) Comparison of the MISE obtained using CV-selected bandwidths and the optimal MISE achieved using the bandwidth minimizing error relative to the true function.}
    \label{Fig::Compare2}
\end{figure}

\subsection{IMR90 Data}
Algorithms \ref{Alg::KSKCV} and \ref{Alg::SSKCV} were run using Hi-C data at the resolution level $25$kb \citep{Rao14}. At this resolution, the dimension of the Hi-C matrix is $10000$. In the discrete case, the input data was binned by scaling the matrix by factors in $\left\{1,2,3,5,8,11,13\right\}$ and in the continuous case, $35$ bandwidths in $\left\{0.003,0.0035,\ldots,0.02\right\}$ were tested to determine the optimal parameters.

\begin{figure}
\centering
	\begin{tabular}{|c||c|c|c|c|c|c|c|c|c|c|}
		\hline
		Chromosome&1&2&3&4&5&6&7&8&9&10\\
		\hline
		Discrete&2&2&1&1&2&2&1&3&2&1\\
		\hline
		Continuous&0.02&0.017&0.004&0.008&0.005&0.08&0.01&0.04&0.005&0.005\\
		\hline
	\end{tabular}
    \caption{The optimal results from cross-validation on Hi-C data}
\end{figure}

One of the benefits of using the KSK algorithm is that, if the kernel is chosen to have infinite support, the resulting KDE is strictly positive and existence of a unique solution to the balancing problem is guaranteed. Thus, the KSK algorithm will always converge to a solution provided that such a kernel is used. This is not true in the case of Hi-C data, which is typically extremely sparse and becomes worse as the resolution of the data increases. This can cause the KR algorithm to fail, as can be seen in Chromosome $9$ at this resolution. In these cases, the matrix may be scaled to decrease the sparsity factor. Even when this occurs, the KSK algorithm again chooses bandwidths much larger than the corresponding bin width selected by the SK algorithm, suggesting that the spatial information present in the data plays a role in the quality of the final estimate. 

\bibliographystyle{apalike}
\bibliography{Report.bib} 

\section{Appendix}
\subsection{Lemmas and Proofs}
\subsubsection{Proof of Lemma \ref{lemma1}}
The kernel density estimator $\hat{f}$ is given by 
\[\hat{f}=\frac{1}{Nh^2}\sum_{k=1}^NK_2\left(\frac{x-X_k}{h},\frac{y-Y_k}{h}\right)=\frac{1}{Nh^2}\sum_{k=1}^N\left[K\left(\frac{x-X_k}{h}\right)K\left(\frac{y-Y_k}{h}\right)\right].\]
Then 
\begin{align*}
\int_{-\infty}^\infty\hat{f}(x,y)dy&=\int_0^1\frac{1}{Nh^2}\sum_{k=1}^NK\left[\left(\frac{x-X_k}{h}\right)K\left(\frac{y-Y_k}{h}\right)\right]dy\\
&=\frac{1}{Nh^2}\sum_{k=1}^N\left[K\left(\frac{x-X_k}{h}\right)\int_{-\infty}^\infty K\left(\frac{y-Y_k}{h}\right)dy\right]\\
&=\frac{1}{Nh^2}\sum_{k=1}^N\left[K\left(\frac{x-X_k}{h}\right)h\right]\\
&=\frac{1}{Nh}\sum_{k=1}^NK\left(\frac{x-X_k}{h}\right)
\end{align*}

If $f$ is symmetric, we may symmetrize the data to obtain $2N$ observations. The corresponding symmetric estimator is 
\[\hat{f}=\frac{1}{2Nh^2}\sum_{k=1}^N\left[K_2\left(\frac{x-X_k}{h},\frac{y-Y_k}{h}\right)+\sum_{k=1}^NK_2\left(\frac{x-Y_k}{h},\frac{y-X_k}{h}\right)\right].\]
The final statement then follows by applying the previous argument to this estimator. 

\subsubsection{Proof of Theorem \ref{Th::CSSKConv}}
\begin{proof}
We begin with the second claim. Two main ideas from non-linear Perron-Frobenius theory  are explicitly used. See \citep{Nussbaum12} for a rigorous treatment of the subject. First, Hilbert's projective metric on $\mathbb{R}^n_+$ is defined as
\begin{equation*}
	d(x,y)=\log{\frac{\max_i\left\{ x_i/y_i\right\}}{\min_i\left\{x_i/y_i\right\}}}.
\end{equation*}
This may be equivalently written
\begin{equation*}
	d(x,y)=\left\lVert \log(x)-\log(y)\right\rVert_V
\end{equation*}
where 
\begin{equation*}
	\left\lVert x\right\rVert_V=\max_i x_i-\min_i x_i.
\end{equation*} 
Second, a theorem of Birkhoff's that states that with respect to this metric, any positive matrix $C$ is a contraction \citep{Birkhoff57}. 

As stated in Section \ref{Section::TR}, one way to study matrix balancing algorithms is to view them as operators acting on the cone of positive vectors. Consider Algorithm~\ref{Alg::SSK}. Letting $x^{(0)}=e$, elementary vector manipulation shows that the next iterate can be obtained using the formula 
\begin{equation}
\label{Eq::SSKIt}
x^{(t+1)}=\sqrt{\frac{x^{(t)}}{Cx^{(t)}}}.
\end{equation}

We then define an operator \[T_Cx:=\sqrt{\frac{x}{Cx}}\] and relate the fixed points of $T_C$ to the matrix balancing problem. If $D$ is the balancing matrix for $C$ and $d$ is the balancing vector, then $DCDe=DCd=e$. We may rearrange this equation to get 
\begin{align*}
	DCDe&=e\\
	DCd&=e\\
	Cd&=\left(D\right)^{-1}e\\
	\left(Cd\right)^{-1}&=d.
\end{align*}
Then
\begin{equation*}
	d=\sqrt{\frac{d}{Cd}}=Td
\end{equation*}
and $d$ is the balancing vector of $C$ if and only if $d$ is a fixed point of $T_C$.

Suppose that $T_C$ is a contraction under Hilbert's projective metric. This would imply $T_C$ converges in direction. Since $T_C$ is constant on rays from the origin, this implies that $T_C$ converges to a fixed point.  Thus, it suffices to show that $T_C$ is a contraction under Hilbert's projective metric. Define three operators
\begin{equation*}
	Sx=\left(Cx\right)^{-1}\qquad Rx=x* Sx\qquad Qx=\sqrt{x}
\end{equation*}
so that $T_Cx=Q\circ R\left(x\right)$.

First, note that 
\begin{align*}
	d\left(1/x,1/y\right)=d\left(x,y\right)
\end{align*}
so 
\begin{align*}
	d\left(Sx,Sy\right)=d\left(Cx,Cy\right)<d\left(x,y\right)
\end{align*}
where the last inequality follows from Birkhoff's theorem. 
Then
\begin{align*}
	d\left(Rx,Ry\right)&=d\left(x\odot Sx,y\odot Sy\right)\\
					     &=\left\lVert \log(x\odot Sx)-\log(y\odot Sy)\right\rVert_V\\
					     &=\left\lVert \log(x)-\log(y)\right\rVert_V+\left\lVert \log(Sx)-\log(Sy)\right\rVert_V\\
					     &=d\left(x,y\right)+d\left(Sx,Sy\right)<2d\left(x,y\right).
\end{align*}
Finally, 
\begin{align*}
	d\left(Qx,Qy\right)=\frac{1}{2}d\left(x,y\right).
\end{align*}
Thus, 
\begin{align*}
	d\left(T_Cx,T_Cy\right)<\frac{1}{2}\left(2d\left(x,y\right)\right)<d\left(x,y\right)
\end{align*}
and $T_C$ is a contraction. 

The continuous case then follows using similar ideas. In Algorithm~\ref{Alg::CSSK}, letting $g^{(0)}(x)=1$, the next iterate is \[g^{(t+1)}(x)=\sqrt{\frac{g^{(t)}(x)}{\int_0^1f(x,y)g^{(t)}(x)dx}}.\]Let
\begin{equation}
      Sg=\left(\int_0^1f(x,y)g(x)dx\right)^{-1}\qquad Rg=g\cdot Sg\qquad Qg=\sqrt{g}
\end{equation}
so that $Tg=Q\circ R\left(g\right)$. It then suffices to show that $S$ remains a contraction in the continuous case, which is shown in \citep{Nussbaum93}.
\end{proof}

\subsubsection{Theorem \ref{Th::KSKResult} and Proof} 
\begin{theorem}
\label{Th::KSKResult}
	Let $f(x,y)$ and $\widehat{f}(x,y)$ be bounded, continuous, strictly positive, symmetric densities on $[0,1]^2$ such that $\left\lVert f(x,y)-\widehat{f}(x,y)\right\rVert_2<\varepsilon$. Let $g$ and $\widehat{g}$ be the balancing functions for $f(x,y)$ and $\widehat{f}(x,y)$, respectively, i.e., 
    \begin{align*}
	\int_0^1g(x)f(x,y)g(y)dx=\int_0^1\widehat{g}(x)\widehat{f}(x,y)\widehat{g}(y)dx&=1,\\
        \int_0^1g(x)f(x,y)g(y)dy=\int_0^1\widehat{g}(x)\widehat{f}(x,y)\widehat{g}(y)dy&=1
	\end{align*} 
    and let $M=\max{\left(g,\widehat{g}\right)}$. Then $\left\lVert g-\widehat{g}\right\rVert_2<M^2\varepsilon$. 
\end{theorem}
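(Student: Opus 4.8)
The plan is to reduce the comparison of the two balancing functions to a single pointwise identity and then control the integral it produces. First I would use symmetry to collapse each balancing condition: since $g(y)$ does not depend on $x$, the equation $\int_0^1 g(x)f(x,y)g(y)\,dx=1$ is equivalent to $\int_0^1 g(x)f(x,y)\,dx=1/g(y)$, and likewise $\int_0^1 \widehat g(x)\widehat f(x,y)\,dx=1/\widehat g(y)$. Subtracting the reciprocals via $\tfrac1a-\tfrac1b=\tfrac{b-a}{ab}$ gives the exact identity
\[
g(y)-\widehat g(y)=g(y)\widehat g(y)\int_0^1\bigl[\widehat g(x)\widehat f(x,y)-g(x)f(x,y)\bigr]\,dx .
\]
The factor $g(y)\widehat g(y)\le M^2$ is precisely where the $M^2$ in the target bound comes from, so after this step it remains to show that the $L^2(y)$-norm of the integral is of order $\varepsilon=\lVert f-\widehat f\rVert_2$.

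Next I would split the integrand to separate the density error from the balancing-function error. Writing $\eta=\widehat f-f$ and $u=g-\widehat g$, one has $\widehat g\widehat f-gf=\widehat g\,\eta-u f$, so the integral becomes $\int_0^1 \widehat g(x)\eta(x,y)\,dx-\int_0^1 u(x)f(x,y)\,dx$. The first piece is genuinely of order $\varepsilon$: by the Cauchy--Schwarz inequality in $x$ (using that $[0,1]$ has unit length), its $L^2(y)$-norm is at most $\lVert \widehat g\rVert_\infty\,\lVert\eta\rVert_2$, which is controlled by $\varepsilon$ and the uniform bounds on the balancing functions. Feeding this back through the pointwise identity and the estimate $g\widehat g\le M^2$ produces a clean contribution to $\lVert g-\widehat g\rVert_2$ of the desired form.

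The hard part will be the second, self-referential piece $\int_0^1 u(x)f(x,y)\,dx$, which reinserts $u=g-\widehat g$ into the bound for $u$ itself; a naive Cauchy--Schwarz estimate here does not close the loop and loses control of the constant. To handle it I would pass to the balanced density: setting $P=g\,f\,g$, which is symmetric and doubly stochastic, gives $f(x,y)=P(x,y)/(g(x)g(y))$, so that $\int_0^1 u(x)f(x,y)\,dx=g(y)^{-1}(T_P v)(y)$, where $v=u/g$ and $T_P$ is the integral operator with kernel $P$. Because $P$ has uniform marginals, a Jensen argument shows $T_P$ is a self-adjoint Markov operator with $\lVert T_P\rVert_{\mathrm{op}}\le 1$, and strict positivity of $f$ (hence of $P$) forces a spectral gap by a Jentzsch/Perron--Frobenius argument, so $-1$ is not an eigenvalue. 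This is exactly what keeps the feedback term from destabilizing the estimate: I would recast the whole identity as a linear equation $(D_{g/\widehat g}+T_P)v=(\text{density-error term})$ in the rescaled unknown $v$, or equivalently run an energy estimate by multiplying the identity by $u/(g\widehat g)$ and integrating, so that the feedback term becomes the quadratic form of $T_P$ and the coercivity $1/(g\widehat g)\ge M^{-2}$ on the remaining term can be exploited.

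Finally I would assemble the pieces, applying Cauchy--Schwarz once more to the density-error term and using the lower bound on the coercive weight, to reach $\lVert g-\widehat g\rVert_2<M^2\varepsilon$. The single delicate point throughout is the self-referential term: everything else is bookkeeping with the Cauchy--Schwarz inequality and the pointwise bound $g\widehat g\le M^2$. I expect establishing the operator-norm and spectral-gap properties of $T_P$ from the doubly-stochastic, strictly positive structure to be the crux, since it is what guarantees the feedback can be absorbed without degrading the constant.
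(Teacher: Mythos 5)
Your opening moves are sound and in fact more concrete than the paper's own argument: the exact identity $g(y)-\widehat g(y)=g(y)\widehat g(y)\int_0^1\bigl[\widehat g(x)\widehat f(x,y)-g(x)f(x,y)\bigr]dx$ is correct, the splitting $\widehat g\widehat f-gf=\widehat g\,\eta-uf$ is correct, and the Cauchy--Schwarz bound on the density-error term is fine. The paper instead bounds $\lVert T_{\widehat f}(g)-T_{\widehat f}(\widehat g)\rVert_2<M^2\varepsilon$ for the operator $T_{\widehat f}(g)=\int_0^1 g(x)\widehat f(x,y)g(y)dx$ and then appeals to Lipschitz continuity of $T_{\widehat f}^{-1}$, justified only by computing the Fr\'echet derivative and asserting its operator norm is nonzero. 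You have isolated the same crux --- quantitative invertibility of the linearization, which in your coordinates is $I+T_P$ --- and your observation that $-1\notin\mathrm{spec}(T_P)$ by strict positivity of the doubly stochastic kernel $P$ is right.

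The gap is in the claim that the feedback term can be absorbed ``without degrading the constant.'' Your energy estimate reads $\int_0^1 u^2/(g\widehat g)\,dy+\langle v,T_Pv\rangle=\langle u,E\rangle$ with $v=u/g$ and $E(y)=\int_0^1\widehat g(x)\eta(x,y)dx$. The coercive term is bounded below by $M^{-2}\lVert u\rVert_2^2$, but $\langle v,T_Pv\rangle$ can be as negative as $-(1-\delta)\lVert v\rVert_2^2$, where $\delta=1+\lambda_{\min}(T_P)>0$ is the spectral gap at $-1$, and $\lVert v\rVert_2^2=\int u^2/g^2$ can be as large as $\lVert u\rVert_2^2/(\min g)^2$. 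Absorption therefore requires $(\min g)^2/M^2>1-\delta$, which is not implied by the hypotheses: $\delta$ is a new quantity depending on $f$ and $g$ and is not controlled by $M$. What your argument actually delivers is a bound of the form $\lVert g-\widehat g\rVert_2\le C(\delta,\min g,\min\widehat g,M)\,\varepsilon$ rather than $M^2\varepsilon$. (To be fair, the paper's proof glosses over the same point: a nonzero Fr\'echet derivative does not make the inverse $1$-Lipschitz; one needs a lower bound on $\lVert DQ_{\widehat g}h\rVert/\lVert h\rVert$, which again involves $1+\lambda_{\min}$.) If you pursue your route, either state the conclusion with a constant involving the spectral gap of the balanced kernel, or supply additional structure forcing $\lambda_{\min}(T_P)$ uniformly away from $-1$.
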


\begin{proof}
	First, note that $g$ and $\widehat{g}$ must be bounded. Let $M=\max{\left\{g,\widehat{g}\right\}}$. Then 
	\begin{align*}
		\left\lVert 1-\int_0^1g(x)\widehat{f}(x,y)g(y)dx\right\rVert_2&=\left\lVert \int_0^1g(x)f(x,y)g(y)dx-\int_0^1g(x)\widehat{f}(x,y)g(y)dx\right\rVert_2\\
										&=\left\lVert \int_0^1g(x)\left(f(x,y)-\widehat{f}(x,y)\right)g(y)dx\right\rVert_2<M^2\varepsilon. 
	\end{align*}
	Define an operator $T_{\widehat{f}}:\mathcal{B}_+\left(\left[0,1\right]\right)\to \mathcal{B}_+\left(\left[0,1\right]\right)$ that maps the cone of bounded positive functions on $[0,1]$ to itself, \[
		T_{\widehat{f}}(g)=\int_0^1g(x)\widehat{f}(x,y)g(y)dx. 
	\]
	The above equation can then be rewritten \[
		\left\lVert T_{\widehat{f}}\left(\widehat{g}\right)-T_{\widehat{f}}\left(g\right)\right\rVert_2<M^2\varepsilon. 
	\]

	Given a fixed function $\widehat{f}$, $T_{\widehat{f}}$ returns the marginals of the scaled function $g(x)\widehat{f}(x,y)g(y)$. Theorem \ref{Th::KSK} then implies that $T_{\widehat{f}}$ is invertible.
    
	Showing that $T_{\widehat{f}}^{-1}$ is Lipschitz continuous at $\widehat{g}$ would then complete the proof. Thus, it suffices to show that the operator norm of the Fr\'{e}chet derivative is nonzero. The Fr\'{e}chet derivative may be computed explicitly as 
	\begin{align*}
		DQ_{\widehat{g}}(h)=\widehat{g}(y)\int_0^1h(x)\widehat{f}(x,y)dx+h(y)/\widehat{g}(y). 
	\end{align*}
	For a fixed $\widehat{g}$, the operator norm of $DQ_{\widehat{g}}(h)$ is clearly nonzero. 
\end{proof}

\subsubsection{Calculation of the Fr\'{e}chet Derivative}
The Fr\'{e}chet derivative of an operator $T$ is defined to be a bounded linear operator $A$ such that \[
	\lim_{\left\lVert h\right\rVert_2\to0}\frac{\left\lVert T(g+h)-T(g)-Ah\right\rVert_2}{\left\lVert h\right\rVert_2}=0. 
\]
If \[
	T(g)=\int_0^1g(x)f(x,y)g(y)dx
\]
then letting \[
	A(h)=g\left(y\right)\int_0^1h\left(x\right)f\left(x,y\right)dx+h\left(y\right)\int_0^1g\left(x\right)f\left(x,y\right)dx, 
\]
\begin{align*}
	\lim_{\left\lVert h\right\rVert_2\to0}&\frac{\left\lVert T(g+h)-T(g)-Ah\right\rVert_2}{\left\lVert h\right\rVert_2}\\
	&=\lim_{\left\lVert h\right\rVert_2\to0}\frac{1}{\left\lVert h\right\rVert_2}\left\lVert\int_0^1\left(g(x)+h(x)\right)f\left(x,y\right)\left(g(y)+h(y)\right)dx+\cdots\right.\\
	&\cdots\left.-\int_0^1g(x)f(x,y)g(y)dx-g(y)\int_0^1h(x)f(x,y)dx-h(y)\int_0^1g(x)f(x,y)dx\right\rVert_2\\
	&=\lim_{\left\lVert h\right\rVert_2\to0}\frac{\left\lVert\int_0^1h(x)f(x,y)h(y)dx\right\rVert_2}{\left\lVert h\right\rVert_2}\\
	&\leq\lim_{\left\lVert h\right\rVert_2\to0}\frac{\left\lVert h\right\rVert_2^2\left\lVert f(x,y)\right\rVert_2}{\left\lVert h\right\rVert_2}=0. 
\end{align*}

\subsection{Classical Results for Matrix Balancing} \label{Section::KnownThms}
In this subsection, we collect some existing results on matrix balancing. Theorems \ref{Th::SK} and \ref{Th::ASK} relate to the matrix balancing and matrix scaling problems, respectively. They establish sufficient conditions for the existence and uniqueness of solutions, and provide simple iterative algorithms that are guaranteed to converge \citep{Sinkhorn64, Sinkhorn67}. Theorem \ref{Th::KSK}, addressing the continuous density scaling problem, offers analogous results in a more general setting \citep{BLN94}.

\begin{theorem} \label{Th::SK}
Given a strictly positive $n\times n$ matrix $C$ there is a unique doubly stochastic matrix $P=D_1CD_2$ where $D_1$ and $D_2$ are diagonal matrices with positive diagonals. The matrices $D_1$ and $D_2$ are unique up to a scalar factor. That is, if $D'_1$ and $D'_2$ are positive diagonal matrices such that $D'_1CD'_2$ is also doubly stochastic, then $D_1=\alpha D'_1$ and $\alpha D_2=D'_2$ for $\alpha>0$. Additionally, Algorithm \ref{Alg::SK} converges to $P$. 
\end{theorem}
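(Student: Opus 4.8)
The plan is to recast the Sinkhorn--Knopp iteration as a fixed-point iteration on the balancing vectors and to show that the associated operator is a strict contraction in Hilbert's projective metric, exactly paralleling the argument for the symmetric case in the proof of Theorem~\ref{Th::CSSKConv}. Writing $d_1$ and $d_2$ for the diagonals of $D_1$ and $D_2$, the two doubly stochastic conditions $D_1CD_2e=e$ and $(D_1CD_2)^{\top}e=e$ are equivalent, after componentwise division, to the coupled equations
\begin{equation*}
d_1=\frac{1}{Cd_2},\qquad d_2=\frac{1}{C^{\top}d_1}.
\end{equation*}
Eliminating $d_1$ yields the single fixed-point equation $d_2=\Phi(d_2)$ for the operator $\Phi(x)=1/\!\left(C^{\top}\!\left(1/(Cx)\right)\right)$, and a direct trace of Algorithm~\ref{Alg::SK} shows that its alternating row and column normalizations realize precisely the iteration $d_2^{(t+1)}=\Phi(d_2^{(t)})$, with $d_1$ recovered at each step as $1/(Cd_2^{(t)})$.

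First I would check that the balanced matrix corresponds exactly to a fixed point of $\Phi$: given $d_2=\Phi(d_2)$, setting $d_1=1/(Cd_2)$ forces the row sums of $P=D_1CD_2$ to equal $1$ by construction, while the fixed-point relation forces the column sums to equal $1$; hence existence and uniqueness of $P$ reduce to existence and uniqueness of a fixed point of $\Phi$. Next I would equip the open positive cone $\mathbb{R}^n_+$ with Hilbert's projective metric $d$ and invoke the two facts already used in the symmetric proof: the reciprocal map $x\mapsto 1/x$ is an isometry, $d(1/x,1/y)=d(x,y)$, and by Birkhoff's theorem every strictly positive matrix contracts $d$ with ratio $\kappa(C)=\tanh(\Delta(C)/4)<1$, where $\Delta(C)$ is the finite projective diameter of the image cone. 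Since $\Phi$ is the composition of multiplication by $C$, a reciprocal, multiplication by $C^{\top}$, and a reciprocal, it follows that $\Phi$ contracts $d$ with ratio $\kappa(C)\kappa(C^{\top})<1$.

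Because Hilbert's metric is invariant under positive scaling, $\Phi$ descends to a strict contraction on the complete metric space of rays in $\mathbb{R}^n_+$, so the Banach fixed-point theorem supplies a unique fixed ray. This gives existence of a balancing vector $d_2$, uniqueness up to a positive scalar $\alpha$, and --- via $d_1=1/(Cd_2)$, which scales $d_1$ by $1/\alpha$ when $d_2$ is scaled by $\alpha$ --- exactly the asserted ambiguity $D_1=\alpha D_1'$, $\alpha D_2=D_2'$. The contraction estimate simultaneously yields geometric convergence of $d_2^{(t)}$, and hence of the iterates $P^{(t)}$ produced by Algorithm~\ref{Alg::SK}, to the limit $P$.

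I expect the main obstacle to be bookkeeping rather than a single hard estimate: one must verify that the diagonal updates $D_1^{(t)}$ and $D_2^{(t)}$ in the pseudocode accumulate to the correct balancing matrices and that normalizing columns before rows does not alter the limiting fixed point, and one must apply Birkhoff's contraction constant to the one-sided maps $C$ and $C^{\top}$ rather than to an already-normalized iterate. Once the reciprocal isometry and Birkhoff's theorem are in place, the contraction factor $\kappa(C)\kappa(C^{\top})$ is immediate, and the remaining work is the routine verification that the fixed point of $\Phi$ reproduces the doubly stochastic matrix $P$.
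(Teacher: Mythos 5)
The paper does not actually prove Theorem~\ref{Th::SK}; it is stated in Appendix~\ref{Section::KnownThms} as a classical result imported from \citet{Sinkhorn64} and \citet{Sinkhorn67}, whose original arguments proceed by exhibiting a monotone bounded quantity along the row/column-normalization iterates and passing to a limit. Your route is therefore genuinely different from the cited source, but it is essentially the same machinery the paper itself deploys for the symmetric case in the proof of Theorem~\ref{Th::CSSKConv}: Hilbert's projective metric, the fact that inversion $x\mapsto 1/x$ is an isometry for that metric, and Birkhoff's theorem that a strictly positive matrix contracts it with ratio $\tanh(\Delta/4)<1$. Your reduction is sound: the doubly stochastic conditions are exactly $d_1=1/(Cd_2)$ and $d_2=1/(C^{\top}d_1)$, the composite $\Phi(x)=1/\bigl(C^{\top}(1/(Cx))\bigr)$ is positively homogeneous of degree one and contracts Hilbert's metric with ratio $\kappa(C)\kappa(C^{\top})<1$, and the ray space of the open cone is complete, so Banach's fixed-point theorem delivers existence, uniqueness of the ray (hence uniqueness of $P$ and the scalar ambiguity $D_1=\alpha D_1'$, $\alpha D_2=D_2'$), and geometric convergence of the alternating normalization. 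What your approach buys over the cited one is an explicit linear convergence rate and a uniqueness proof that comes for free from contraction rather than from a separate argument; it also unifies the proof with the paper's treatment of the symmetric and continuous cases (where the single-diagonal operator $T_Cx=\sqrt{x/(Cx)}$ needs the extra square-root step to be a contraction, whereas your two-sided $\Phi$ does not). The only points to nail down are the ones you already flag: that projective convergence of $d_2^{(t)}$ yields actual convergence of $P^{(t)}$ (which follows because each full iterate is column-normalized, pinning the scale) and that the accumulated diagonal factors in Algorithm~\ref{Alg::SK} converge accordingly.
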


\begin{theorem} \label{Th::ASK}
Given a strictly positive $n\times n$ matrix $C$ and two positive $n\times 1$ column vectors $r$ and $c$ such that $r^{\top}e=c^{\top}e$, there exist diagonal matrices $D_1$ and $D_2$ such that for $P:=D_1CD_2$,
    \begin{align*}
    Pe&=r\\
    P^{\top}e&=c.
    \end{align*} 
    The matrices $D_1$ and $D_2$ are unique up to a scalar factor. That is, if $\tilde{D}_1$ and $\tilde{D}_2$ are diagonal matrices such that $\tilde{D}_1C\tilde{D}_2$ also has the specified row and column sums, then $D_1=\alpha \tilde{D}_1$ and $\alpha D_2=\tilde{D}_2$. Additionally, the process of alternatively scaling the rows and columns of $C$ to $r$ and $c$, respectively, is an algorithm that converges to $P$. 
\end{theorem}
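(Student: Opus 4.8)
The plan is to mirror the operator-theoretic argument used for Theorem~\ref{Th::CSSKConv}, now with two distinct target margins. Rather than seeking a reduction to the doubly stochastic case of Theorem~\ref{Th::SK}, which is less direct for unequal margins, I would argue by a contraction in Hilbert's projective metric. Writing $u=\diag(D_1)$ and $v=\diag(D_2)$, the requirement that $P=D_1CD_2$ satisfy $Pe=r$ and $P^{\top}e=c$ becomes the entrywise system $u\odot(Cv)=r$ and $v\odot(C^{\top}u)=c$. This is exactly what the alternating algorithm solves: given $v>0$, set $u=r/(Cv)$ so the row sums match, then set $v'=c/(C^{\top}u)$ so the column sums match, and iterate. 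I would encode one full sweep as an operator $T_C v = c/\bigl(C^{\top}(r/(Cv))\bigr)$ on the open cone $\mathbb{R}^n_+$, all operations entrywise, and observe that a positive fixed point of $T_C$ yields a solution of the scaling problem.

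First I would show $T_C$ is a contraction under Hilbert's metric $d$, as in the symmetric operator of Theorem~\ref{Th::CSSKConv}. Using the invariances $d(1/x,1/y)=d(x,y)$ and $d(a\odot x,a\odot y)=d(x,y)$ for any fixed positive $a$, the row step gives $d(r/(Cv),r/(C\tilde v))=d(Cv,C\tilde v)$ and the column step reduces to $d(C^{\top}u,C^{\top}\tilde u)$. Birkhoff's theorem then supplies contraction factors $\kappa(C),\kappa(C^{\top})<1$ for the strictly positive $C$ and $C^{\top}$, so $d(T_Cv,T_C\tilde v)\le \kappa(C)\kappa(C^{\top})\,d(v,\tilde v)$. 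Since strict positivity keeps all intermediate vectors positive, $T_C$ is a self-map of the cone, which is complete under $d$ modulo scaling; the Banach fixed point theorem then yields a unique fixed ray and convergence of the iteration to it, which is the asserted convergence of the alternating algorithm.

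It remains to extract existence, exact margins, and uniqueness up to a scalar, and here the hypothesis $r^{\top}e=c^{\top}e$ enters. Because $T_C$ is homogeneous of degree one, the fixed ray satisfies $T_Cv=\mu v$ for a single scalar $\mu>0$; at such $v$ the row sums of $P=\diag(u)C\diag(v)$ equal $r$ exactly, while the column sums equal $c/\mu$. Equating the two expressions for the total mass forces $\mu=(c^{\top}e)/(r^{\top}e)$, so the compatibility assumption makes $\mu=1$ and renders both margins exact simultaneously. The homogeneity $v\mapsto\lambda v\Rightarrow u\mapsto u/\lambda$ leaves $P$ unchanged, so $P$ is unique while $(D_1,D_2)$ is determined only up to $(\alpha D_1,\alpha^{-1}D_2)$; uniqueness of the fixed ray shows any competing scaling pair $(\tilde D_1,\tilde D_2)$ must be of this form, giving the uniqueness claim.

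I expect the main obstacle to be the careful handling of the scalar $\mu$ and the precise role of the compatibility condition: the contraction alone produces a fixed ray regardless of whether $r^{\top}e$ equals $c^{\top}e$, and one must track that a generic fixed ray matches the two margins only up to a common factor, so that exact matching of both $r$ and $c$ is equivalent to $\mu=1$. A secondary technical point is confirming that $T_C$ is genuinely a self-map of the complete Hilbert-metric space and that $C^{\top}$ inherits the strict positivity needed for Birkhoff's bound; both are routine but should be stated to close the contraction argument cleanly.
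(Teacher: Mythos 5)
Your proposal is correct, but there is nothing in the paper to compare it against line by line: the paper never proves Theorem~\ref{Th::ASK}. It is collected in Appendix~\ref{Section::KnownThms} as a classical result and attributed to \citet{Sinkhorn672}, whose original argument analyzes the alternating iteration directly rather than via a projective-metric contraction. So your proof is a genuinely different (and self-contained) route, and it is sound: the reduction to the entrywise system $u\odot(Cv)=r$, $v\odot(C^{\top}u)=c$; the sweep operator $T_Cv=c/\bigl(C^{\top}(r/(Cv))\bigr)$ with the two Birkhoff factors giving $d(T_Cv,T_C\tilde v)\le\kappa(C)\kappa(C^{\top})\,d(v,\tilde v)$; and, most importantly, the observation that a fixed ray by itself only yields margins $\bigl(r,\,c/\mu\bigr)$ with $\mu=(c^{\top}e)/(r^{\top}e)$, so the hypothesis $r^{\top}e=c^{\top}e$ is precisely what forces $\mu=1$. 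The uniqueness claim then falls out of uniqueness of the fixed ray exactly as you say, with $\tilde v=\lambda v$ forcing $\tilde u=u/\lambda$. In effect you have transported the paper's own technique for Theorem~\ref{Th::CSSKConv} (Hilbert's metric plus Birkhoff's theorem, following the cone-operator viewpoint of \citealp{BPS66}) from the symmetric balancing problem to the two-margin scaling problem; this is the now-standard projective-metric proof of Sinkhorn's prescribed-margins theorem, due to Franklin and Lorenz (1989). What it buys over the cited classical argument is a uniform geometric rate $\kappa(C)\kappa(C^{\top})$ per sweep and a transparent role for the compatibility condition; in fact your explicit use of uniform Birkhoff coefficients is tighter than the paper's proof of Theorem~\ref{Th::CSSKConv}, which only records the strict inequality $d(Cx,Cy)<d(x,y)$ before invoking a contraction conclusion. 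Two details should be stated to close the argument. First, unlike the paper's symmetric operator $T_Cx=\sqrt{x/(Cx)}$, which is homogeneous of degree zero and hence literally constant on rays, your sweep operator is homogeneous of degree one, so the Banach fixed point theorem must be applied to the induced map on the quotient space of positive rays (complete under $d$); this is exactly why the eigenvalue $\mu$ appears and must be pinned down separately, as you do. Second, projective convergence of $v^{(t)}$ must be converted into convergence of the scaled matrices: since $v\mapsto\diag\bigl(r/(Cv)\bigr)\,C\,\diag(v)$ is scale-invariant and continuous on the open cone, convergence of rays implies $P^{(t)}\to P$, which is the convergence statement actually asserted in the theorem. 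Both points are routine, but without them the contraction argument proves convergence of directions rather than of the algorithm's output.
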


\begin{theorem}
    \label{Th::KSK}
    Given a strictly positive density $f(x,y)$ on the unit square and two positive functions $r$ and $c$, there exist positive functions $h_1(x)$ and $h_2(y)$ such that for $p(x,y):=h_1(x)f(x,y)h_2(y)$,
    \begin{align*}
    \int_0^1p(x,y)dx&=r(y)\\
    \int_0^1p(x,y)dy&=c(x).
    \end{align*}
    The functions $f$ and $g$ are unique up to a scalar factor. That is, if $h_1'$ and $h_2'$ are diagonal matrices such that $h'_1(x)f(x,y)h_2'(y)$ also has the specified marginals, then $h_1'=\alpha h_1$ and $\alpha h_2'=h_2$. Additionally, the process of alternatively scaling $f$ so its marginals are $r(x)$ and $c(y)$ is an algorithm that converges to $p$. 
    
    In particular, if $f$ is symmetric, i.e. $f(x,y)=f(y,x)$ and $m=r(x)=c(y)$ there exists a unique positive function $h$ such that for $p(x,y):=h(x)f(x,y)h(y),$
    \begin{align*}
    \int_0^1p(x,y)dx&=m(y)\\
    \int_0^1p(x,y)dy&=m(x).
    \end{align*}
\end{theorem}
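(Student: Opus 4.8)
The plan is to reduce all three claims of Theorem~\ref{Th::KSK} --- existence, uniqueness up to a scalar, and convergence of the alternating scaling --- to a single statement: the map whose fixed points are the scaling factors is a contraction in Hilbert's projective metric $d$. This mirrors the argument already used for Theorem~\ref{Th::CSSKConv}, the only new ingredient being that we must accommodate asymmetric marginals $r,c$ and two independent scaling functions $h_1,h_2$. I would work on the cone of positive continuous functions on $[0,1]$ that are bounded above and below, equipped with $d$, and exploit two elementary isometries already noted in the paper: $d(1/\phi,1/\psi)=d(\phi,\psi)$, and $d(u\phi,u\psi)=d(\phi,\psi)$ for any fixed positive $u$, since the factor $u$ cancels in every ratio $u\phi/(u\psi)=\phi/\psi$.

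First I would introduce the integral operators $(A\phi)(y)=\int_0^1 f(x,y)\phi(x)\,dx$ and $(A^{*}\psi)(x)=\int_0^1 f(x,y)\psi(y)\,dy$, and write the alternating scaling as an explicit iteration: given $h_1$, set $h_2=r/(Ah_1)$ so the $y$-marginal of $h_1 f h_2$ equals $r$ exactly, then update $h_1'=c/(A^{*}h_2)$ so the $x$-marginal equals $c$. The update map is thus $h_1\mapsto c/\big(A^{*}(r/(Ah_1))\big)$. By the two isometries above, this composition differs from $A^{*}\circ A$ only by distance-preserving operations, so its contraction ratio is at most the product of the Birkhoff ratios of $A$ and $A^{*}$. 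A short rearrangement --- exactly as in the passage from $DCDe=e$ to $d=T_C d$ in the proof of Theorem~\ref{Th::CSSKConv} --- shows that $h_1$ is a fixed point if and only if $(h_1,h_2)$ solves the scaling problem, so all three conclusions will follow from the Banach fixed-point theorem once the contraction is in hand.

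The main obstacle is establishing that $A$ (and likewise $A^{*}$) is a strict contraction in $d$, i.e. the continuous Birkhoff--Hopf theorem invoked in the paper through \citep{Nussbaum93,Nussbaum12}. Here strict positivity and continuity of $f$ on the compact square are essential: they give uniform bounds $0<m\le f\le M<\infty$, whence for any positive $\phi$ the image satisfies $m\lVert\phi\rVert_1\le A\phi\le M\lVert\phi\rVert_1$ pointwise. This bounds the projective diameter by $\sup_{\phi,\psi}d(A\phi,A\psi)\le 2\log(M/m)<\infty$, and Birkhoff--Hopf then yields contraction ratio $\tanh\!\big(\tfrac{1}{2}\log(M/m)\big)<1$. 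The one point requiring care is verifying that the iterates stay in a complete subset of the projective cone so that Banach applies; this holds because $A$, $A^{*}$, and multiplication by the bounded positive functions $r,c$ all map into functions bounded above and below.

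Finally I would record the consequences. The contraction gives a unique fixed ray, so $h_1$ is determined up to a positive scalar $\alpha$; since $h_2=r/(Ah_1)$ then scales as $1/\alpha$, this is exactly the stated ambiguity $h_1'=\alpha h_1$, $\alpha h_2'=h_2$, and both marginal equations hold at the fixed point by construction. The alternating-scaling iteration converges because it is precisely the orbit of the contraction. The symmetric specialization ($f(x,y)=f(y,x)$, $r=c=m$) follows by the same contraction argument applied to the single-operator scaling map $h\mapsto\sqrt{m/(Ah)}$, which is analogous to $T_C$ in Theorem~\ref{Th::CSSKConv} with target marginal $m$ in place of $1$ and delivers a unique symmetric $h$.
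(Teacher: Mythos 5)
The paper never proves Theorem~\ref{Th::KSK}: it appears in the appendix as a quoted classical result attributed to \citep{BLN94}, so there is no in-paper argument to compare against. That said, your proposal follows what is essentially the standard route to this theorem, and it is the same machinery the paper itself uses (via \citep{Nussbaum93,Nussbaum12}) to prove Theorem~\ref{Th::CSSKConv}: decompose one full sweep of alternating scaling into the integral operator $A$, inversion, multiplication by $r$, then $A^{*}$, inversion, multiplication by $c$; observe that the four non-integral steps are isometries of Hilbert's projective metric; and bound the Birkhoff contraction ratio of $A$ and $A^{*}$ using the uniform bounds $0<m\le f\le M$. Your projective-diameter bound $2\log(M/m)$ and the resulting ratio $\tanh\left(\tfrac{1}{2}\log(M/m)\right)<1$ are correct, and the reduction of existence, uniqueness-up-to-scalar, and algorithmic convergence to a single fixed-ray statement is sound.

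Two concrete issues. First, your symmetric-case map $h\mapsto\sqrt{m/(Ah)}$ is not the right analogue of $T_{C}x=\sqrt{x/(Cx)}$: its fixed points satisfy $h^{2}\,(Ah)=m$ rather than the balancing equation $h\,(Ah)=m$. The correct map is $h\mapsto\sqrt{h\,m/(Ah)}$; more cleanly, the symmetric case follows from the general one, since symmetry of $f$ makes $(h_{2},h_{1})$ a second solution, uniqueness forces $h_{2}=\alpha h_{1}$, and $h=\sqrt{\alpha}\,h_{1}$ is then the symmetric balancer. Second, existence requires the compatibility condition $\int_{0}^{1}r=\int_{0}^{1}c$ (stated in the discrete Theorem~\ref{Th::ASK} but omitted from Theorem~\ref{Th::KSK}); your argument needs it precisely where the fixed ray is promoted to an exact solution, since a representative $h_{1}$ of the fixed ray only yields marginals $r$ and $\lambda c$ with $\lambda=\int r/\int c$. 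The remaining gaps you flag yourself --- completeness of the relevant projective section of the cone, and upgrading convergence of rays to convergence of $p^{(t)}$ (your sweep map is homogeneous of degree one, unlike the degree-zero $T_{C}$, so the scale must be pinned by $\iint p^{(t)}=\int r$) --- are real but standard, and are exactly what the cited Nussbaum machinery supplies.
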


\subsection{A Result on the Consistency of Algorithm \ref{Alg::SSK}}
This theorem is a discrete analogue of Theorem \ref{Th::KSKResult}. This algorithm is not directly applicable to the Hi-C situation, as the underlying density is continuous. However, it may be of interest when the quantity to be estimated is also a matrix. The balancing equation are scaled by $n$ to ensure that the entries of the marginals sum to $1$.

\begin{theorem}\label{Th::SKResult}
	Let $C$ and $\widehat{C}$ be strictly positive, symmetric $n\times n$ matrices such that $\min c_{ij}=\min \hat{c}_{ij}=1$, $\max{c_{ij},\hat{c}_{ij}}<M$, and $\left\lVert C-\widehat{C}\right\rVert_F<\frac{\varepsilon}{n}$. Let $D$ and $\widehat{D}$ be the balancing matrices for $C$ and $\widehat{C}$, respectively, $DCDe=\widehat{D}\widehat{C}\widehat{D}e=\frac{e}{n}$. Then $\left\lVert D-\widehat{D}\right\rVert_F<M\varepsilon$. 
\end{theorem}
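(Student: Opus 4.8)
The plan is to mirror the proof of Theorem~\ref{Th::KSKResult} in the matrix setting, viewing the balancing operation as a map on the cone of positive vectors and establishing Lipschitz stability of its inverse. Since $D$ and $\widehat{D}$ are diagonal, write $d$ and $\widehat{d}$ for their diagonals, so that $\lVert D-\widehat{D}\rVert_F=\lVert d-\widehat{d}\rVert_2$ and it suffices to bound the latter. The balancing equations read $d_i(Cd)_i=1/n$ and $\widehat{d}_i(\widehat{C}\widehat{d})_i=1/n$. First I would establish two-sided entrywise bounds on the balancing vectors, the discrete analogue of the boundedness of $g$ used in Theorem~\ref{Th::KSKResult}. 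Using $1\le c_{ij}<M$ and $d_j>0$ gives $\lVert d\rVert_1\le(Cd)_i<M\lVert d\rVert_1$; multiplying by $d_i$, invoking $d_i(Cd)_i=1/n$, and summing over $i$ yields $1/\sqrt{M}<\lVert d\rVert_1\le1$, whence $\frac{1}{nM}<d_i\le\frac{\sqrt{M}}{n}$, and identically for $\widehat{d}$.

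Next I would transfer the perturbation from the data to the marginals. Applying $D$ to $\widehat{C}$ and using $DCDe=e/n$,
\[
\frac{e}{n}-D\widehat{C}De=DCDe-D\widehat{C}De=D(C-\widehat{C})De,
\]
whose $i$th entry is $d_i\sum_j(c_{ij}-\widehat{c}_{ij})d_j$. A Cauchy--Schwarz estimate together with the entrywise bounds and $\lVert C-\widehat{C}\rVert_F<\varepsilon/n$ gives $\lVert D(C-\widehat{C})De\rVert_2\le(\max_i d_i)\lVert d\rVert_2\lVert C-\widehat{C}\rVert_F<\frac{\sqrt{M}\,\varepsilon}{n^2}$.

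Now define the marginal operator $T_{\widehat{C}}(v)=\diag(v)\widehat{C}\diag(v)e$, whose $i$th component is $v_i(\widehat{C}v)_i$, so that $T_{\widehat{C}}(\widehat{d})=e/n$ and $T_{\widehat{C}}(d)=D\widehat{C}De$; the display above is then exactly $\lVert T_{\widehat{C}}(\widehat{d})-T_{\widehat{C}}(d)\rVert_2<\frac{\sqrt{M}\,\varepsilon}{n^2}$. By Theorem~\ref{Th::SK}, and in the symmetric case by Theorem~\ref{Th::CSSKConv}, $T_{\widehat{C}}$ is a bijection from the positive cone onto the admissible marginals, so it remains to show $T_{\widehat{C}}^{-1}$ is Lipschitz near $e/n$ with a constant controlled by $M$ and $n^2$. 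Via the mean value theorem this reduces to bounding $\lVert J^{-1}\rVert$ for the Jacobian $J=\diag(\widehat{C}v)+\diag(v)\widehat{C}$ along the segment from $\widehat{d}$ to $d$; combining $\lVert d-\widehat{d}\rVert_2\le\sup\lVert J^{-1}\rVert\cdot\frac{\sqrt{M}\,\varepsilon}{n^2}$ with $\lVert D-\widehat{D}\rVert_F=\lVert d-\widehat{d}\rVert_2$ would then deliver the claimed bound $<M\varepsilon$, with room to spare.

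The hard part will be the uniform control of $\lVert J^{-1}\rVert$. At $v=\widehat{d}$ the diagonal term $\diag(\widehat{C}\widehat{d})$ has entries $(\widehat{C}\widehat{d})_i=1/(n\widehat{d}_i)\in[1/\sqrt{M},M]$, but the second term $\diag(v)\widehat{C}$ is not symmetric and $\widehat{C}$ need not be positive semidefinite, so $J$ can be indefinite and a naive Weyl/triangle bound on $\sigma_{\min}(J)$ fails; indeed, a $2\times2$ example with $\widehat{c}_{12}=M$ gives eigenvalues $\sqrt{2(1+M)}$ and $\sqrt{2/(1+M)}$ for $J$, so $\lVert J^{-1}\rVert$ genuinely grows like $\sqrt{M}$, consistent with the factor $M$ in the conclusion. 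To obtain a bound uniform in $n$ I would not argue coercivity directly but instead invoke the contraction structure already used in Theorem~\ref{Th::CSSKConv}: the Sinkhorn operator $T_Cx=\sqrt{x/(Cx)}$ is a Birkhoff contraction whose factor depends only on the projective diameter of $C$, which is at most $2\log M$ since the four-entry ratios $c_{ik}c_{jl}/(c_{jk}c_{il})$ lie in $(M^{-2},M^{2})$, crucially independent of $n$. Standard fixed-point stability for the two nearby contractions $T_C$ and $T_{\widehat{C}}$ then bounds the Hilbert-metric distance between $d$ and $\widehat{d}$ by $(1-\rho)^{-1}$ times the one-step displacement $d(T_C\widehat{d},\widehat{d})$, and converting back to Euclidean distance using $\frac{1}{nM}<\widehat{d}_i\le\frac{\sqrt{M}}{n}$ supplies the missing powers of $n$ and closes the estimate.
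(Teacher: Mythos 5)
Your proposal shares the paper's skeleton up to the key stability step: you derive the same entrywise bounds on the balancing vector (your Lemma-\ref{lemma::AbsBnd}-analogue, correctly adapted to the $e/n$ normalization used in the theorem statement), transfer the perturbation via $\frac{e}{n}-D\widehat{C}De=D(C-\widehat{C})De$, and reduce to Lipschitz continuity of $T_{\widehat{C}}^{-1}$, exactly as the paper does. Where you genuinely diverge is in how that Lipschitz bound is obtained. The paper linearizes: it computes $J_{\widehat{C}}(\widehat{d})=(\widehat{D}\widehat{C}\widehat{D}+I)\widehat{D}^{-1}$ and lower-bounds its eigenvalues using the Gershgorin circle theorem (Lemma~\ref{lemma::GCT}) together with Lemma~\ref{lemma::AbsBnd}. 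You instead bypass the Jacobian entirely and invoke the Birkhoff contraction property of the Sinkhorn operator $T_Cx=\sqrt{x/(Cx)}$ in Hilbert's projective metric, with contraction factor controlled by the projective diameter $\Delta(C)\le 2\log M$, and then apply the standard perturbation bound for fixed points of nearby contractions, $d(d,\widehat{d})\le(1-\rho)^{-1}d(T_C\widehat{d},\widehat{d})$. This is a legitimate alternative and in some respects a stronger one: your contraction factor is uniform in $n$ by construction, and your $2\times2$ example correctly identifies why a naive spectral bound on $J$ is delicate (the paper's Gershgorin route only works because of the $+I$ term, and its Jacobian evaluation is confined to the single point $\widehat{d}$ rather than the segment your mean-value formulation would require). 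Your route also reuses machinery the paper already develops for Theorem~\ref{Th::CSSKConv}, which is aesthetically appealing.

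The one step you gloss over that needs real work is the conversion from the Hilbert metric back to the Euclidean norm. Hilbert's projective metric is scale-invariant, so bounding $d(d,\widehat{d})$ only controls the ratios $d_i/\widehat{d}_i$ up to a common multiplicative constant; the entrywise bounds $\frac{1}{nM}<d_i,\widehat{d}_i\le\frac{\sqrt{M}}{n}$ alone pin that constant down only to within a factor of order $\sqrt{M}$, which does not vanish as $\varepsilon\to0$ and would destroy the bound. You must fix the scale separately, e.g., from the quadratic normalization $d^{\top}Cd=\sum_i(DCDe)_i=1=\widehat{d}^{\top}\widehat{C}\widehat{d}$, which together with $\lVert C-\widehat{C}\rVert_F<\varepsilon/n$ and the projective closeness forces the common ratio to be $1+O(\varepsilon)$. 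With that addition your argument closes, and indeed with better $n$-dependence than the stated $M\varepsilon$; without it, the final sentence of your proposal is an assertion rather than a proof.
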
 

\subsubsection{Proof of Theorem \ref{Th::SKResult}}
We require two supporting lemmas, which are proven below.

\begin{lemma}\label{lemma::AbsBnd}
Suppose $C$ is an $n\times n$ positive, symmetric matrix such that $\min c_{ij}=1$ and $\max c_{ij}<M$ for some absolute constant $M$. Let $D$ be the balancing matrix for $C$. Then $\sqrt{n/M}<d_i<\sqrt{n}M$. 
\end{lemma}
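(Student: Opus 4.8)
The plan is to reduce the balancing condition to a single scalar fixed-point relation per coordinate and then bootstrap that relation against itself. Writing the balancing condition row by row, the $i$th equation takes the form $d_i=\sum_{j=1}^n c_{ij}/d_j$ (equivalently, the $i$th row of the balanced matrix $D^{-1}CD^{-1}$ sums to $1$, with $d_i$ the $i$th diagonal of the bias matrix, consistent with the generative form $C_{ij}=d_id_jP_{ij}$). I would introduce the single aggregate $T:=\sum_{j=1}^n 1/d_j$. The entrywise hypotheses $1\le c_{ij}<M$ then give, uniformly in $i$, the two-sided bound $T\le d_i<MT$: the lower bound uses $c_{ij}\ge 1$ and the upper bound uses $c_{ij}<M$, both applied directly to the fixed-point relation. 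At this stage every diagonal entry of $D$ is already known to lie in the band $[T,MT)$, so all $d_i$ are comparable up to the factor $M$; what remains is to locate the band itself.

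First I would close the loop to pin down the scale of $T$. Feeding the uniform bound $T\le d_j<MT$ back into the definition $T=\sum_j 1/d_j$ produces two self-consistent inequalities: from $d_j\ge T$ we get $1/d_j\le 1/T$, hence $T\le n/T$ and $T\le\sqrt n$; from $d_j<MT$ we get $1/d_j>1/(MT)$, hence $T>n/(MT)$ and $T>\sqrt{n/M}$. Together these give $\sqrt{n/M}<T\le\sqrt n$, which is the only genuinely nontrivial step in the argument.

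Finally I would combine the two stages. The lower bound follows from $d_i\ge T>\sqrt{n/M}$, and the upper bound from $d_i<MT\le M\sqrt n=\sqrt n\,M$, which is exactly the claim. I would also verify that all the inequalities remain strict: the strictness on the left comes from $T>\sqrt{n/M}$, and on the right from the strict entrywise bound $c_{ij}<M$ entering $d_i<MT$, so neither endpoint is attained.

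I expect the main obstacle to be conceptual rather than computational: recognizing that the coupled bounds $T\le d_i<MT$ together with the definition of $T$ can be bootstrapped into an absolute estimate on $T$, and getting the normalization convention right so that the inequalities point in the direction that produces $\sqrt{n/M}<d_i<\sqrt n\,M$ rather than its reciprocal. Once the fixed-point relation is written in the correct form, the remaining algebra is routine.
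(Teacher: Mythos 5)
Your proof is correct and is essentially the same argument as the paper's: a two-sided entrywise bound $1\le c_{ij}<M$ applied to the row balancing equation yields a uniform band for $d_i$ in terms of a single aggregate sum, which is then bootstrapped against its own definition to pin down its scale between $\sqrt{n/M}$ and $\sqrt{n}$. The only difference is the convention: the paper works with $\sum_j c_{ij}d_id_j=1$ and the aggregate $\sum_j d_j$, while you work with the reciprocal form $d_i=\sum_j c_{ij}/d_j$ and $T=\sum_j 1/d_j$. The normalization pitfall you flagged is real: under the paper's own convention its displayed inequalities actually yield $1/(M\sqrt{n})<d_i<\sqrt{M/n}$, the reciprocals of the stated bounds, whereas your convention is the one that delivers $\sqrt{n/M}<d_i<\sqrt{n}\,M$ exactly as the lemma claims.
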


\begin{lemma}\label{lemma::GCT}
	Let $D$ and $C$ be square matrices such that $D$ is diagonal and $P=DCD$ is a positive symmetric doubly stochastic matrix. The smallest eigenvalue of $P$ is bounded below by \[\min_i\left(2d^2_{ii}c_{ii}\right).\]
\end{lemma}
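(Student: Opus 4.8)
The plan is to bound $\lambda_{\min}(P)$ by the Gershgorin circle theorem, which the lemma's label and the appearance of the diagonal quantity $d_{ii}^2 c_{ii}$ both point to. First I would compute the diagonal of $P = DCD$: from $P_{ij} = d_{ii} c_{ij} d_{jj}$ we read off $P_{ii} = d_{ii}^2 c_{ii}$, the quantity that (doubled) appears in the claim. Since $P$ is symmetric its spectrum is real, so it suffices to bound the least eigenvalue from below.

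Next I would apply Gershgorin to $P$: every eigenvalue $\lambda$ lies in a disc centered at some diagonal entry $P_{ii}$ with radius equal to the off-diagonal absolute row sum $R_i = \sum_{j \ne i} |P_{ij}| = \sum_{j \ne i} P_{ij}$, the second equality because $P$ is positive. The decisive structural input is double stochasticity: each row of $P$ has a fixed sum, so the radius is pinned to the diagonal entry via $R_i = (\text{row sum}) - P_{ii}$. Feeding this into the lower edge of the Gershgorin disc, $\lambda \ge P_{ii} - R_i$, and then minimizing over $i$ controls $\lambda_{\min}(P)$ purely in terms of the diagonal entries $2 d_{ii}^2 c_{ii}$ and the common row sum.

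The step I expect to be the main obstacle is controlling the contribution of the Gershgorin radius so that it does not erode the stated clean bound. With the textbook normalization (row sums equal to $1$) the edge estimate reads $\lambda \ge 2 P_{ii} - 1$, so the off-diagonal mass $R_i$ costs a full additive constant, and obtaining $\min_i 2 d_{ii}^2 c_{ii}$ exactly requires this cost to be absorbed. The route I would take is to work in the normalization operative in Theorem~\ref{Th::SKResult}, where the balanced marginals satisfy $DCDe = e/n$; there the common row sum is $1/n$, the Gershgorin radius is $R_i = 1/n - P_{ii}$, and the residual correction is $O(1/n)$, hence negligible for the eigenvalue control needed downstream. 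As an alternative sharpening I would exploit that $e$ is the Perron eigenvector ($Pe$ equals the row sum times $e$, the top eigenvalue), so the minimizing eigenvector may be taken orthogonal to $e$; the constraint $\sum_i v_i = 0$ can then be inserted into the quadratic form $v^\top P v = \sum_i P_{ii} v_i^2 + \sum_{i \ne j} P_{ij} v_i v_j$ to tighten the off-diagonal cross-term. Reconciling the raw circle-theorem estimate with the clean constant in the statement is where the real work lies; the diagonal identification and the application of Gershgorin itself are routine.
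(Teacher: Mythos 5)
Your approach is exactly the paper's: read off $P_{ii}=d_{ii}^2c_{ii}$, apply Gershgorin, and use the fixed row sums to pin the disk radius to $R_i=\text{(row sum)}-P_{ii}$, which yields $\lambda_{\min}(P)\ge\min_i\left(2d_{ii}^2c_{ii}-1\right)$ under the standard row-sum-one convention. The ``main obstacle'' you devote your last paragraph to is not a gap in your argument but a typo in the lemma statement: the bound as literally written, $\min_i\left(2d_{ii}^2c_{ii}\right)$, is false. For instance, take $C=P=\begin{pmatrix}1/2&1/2\\1/2&1/2\end{pmatrix}$ and $D=I$: then $P$ is positive, symmetric, doubly stochastic, $\lambda_{\min}(P)=0$, yet $\min_i 2d_{ii}^2c_{ii}=1$. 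The paper's own proof of Lemma~\ref{lemma::GCT} derives precisely your estimate $\min_i(2p_{ii}-1)$, and it is this $-1$ version that is invoked downstream in the proof of Theorem~\ref{Th::SKResult} via the quantity $\min_i(2\hat{d}_i^2\hat{c}_{ii}-1)>0$. Consequently your proposed repairs cannot and need not succeed: switching to the $DCDe=e/n$ normalization merely replaces the correction $-1$ by $-1/n$ (and there $P$ is no longer doubly stochastic in the stated sense), and restricting the Rayleigh quotient to $e^{\perp}$ also cannot recover a bound that is simply not true. Your first two paragraphs already constitute the paper's proof of the corrected statement; you should state the lemma with the $-1$ and stop there.
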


\begin{proof}
    First, note that 
	\begin{equation}\label{opcond}
		\left\lVert \frac{e}{n}-D\widehat{C}De\right\rVert_F=\left\lVert DCDe-D\widehat{C}De\right\rVert_F=\left\lVert D\left(C-\widehat{C}\right)De\right\rVert_F<M\varepsilon,
	\end{equation}
	where the last inequality follows from Lemma \ref{lemma::AbsBnd}. Define an operator $T_{\widehat{C}}:\mathbb{R}^n\to\mathbb{R}^n$ \[
		T_{\widehat{C}}(x)=X\widehat{C}Xe.
	\]
	Given a fixed matrix $\widehat{C}$, $T$ returns the marginals of the scaled matrix $X\widehat{C}X$. Theorems \ref{Th::SK} and \ref{Th::ASK} then imply that $T$ is invertible. 

	Since, $\widehat{D}\widehat{C}\widehat{D}e=e/n,$ Equation \ref{opcond} can be rewritten \[
		\left\lVert T_{\widehat{C}}(\widehat{d})-T_{\widehat{C}}\left(d\right)\right\rVert_F<M\varepsilon. 
	\]

	Showing that $T_{\widehat{C}}^{-1}$ is Lipschitz continuous at $x=\widehat{d}$ would then complete the proof. By the inverse function theorem, it suffices to show that the Jacobian of $T_{\widehat{C}}$ is bounded below by some absolute positive constant this point. The Jacobian matrix can be explicitly calculated to be \[
		J_{\widehat{C}}(x)=\widehat{X}\widehat{C}+\text{diag}\left(\widehat{C}x\right). 
	\]
	Since $\text{diag}\left(\widehat{C}\widehat{d}\right)=1/\widehat{D},$  \[
        J_{\widehat{C}}\left(\widehat{d}\right)=\left(\widehat{D}\widehat{C}\widehat{D}+I\right)\left(\widehat{D}\right)^{-1}.
	\]
	
	Lemma \ref{lemma::GCT} bounds the eigenvalues of $\widehat{D}\widehat{C}\widehat{D}+I$ by \[\min_i(2\hat{d}^2_{ii}\widehat{c}_{ii}-1)>0.\]

	Since $\widehat{D}$ is a diagonal matrix, the eigenvalues of $J_{\widehat{C}}\left(\widehat{d}\right)$ are bounded below by \[\min_j\left(\frac{1}{\widehat{d}_j}\right)\cdot\min_i(2\hat{d}^2_i\widehat{c}_{ii}-1).\]

  	It now suffices to bound $\widehat{d}_i$ and $\widehat{c}_{ii}$. The bound on the former follows from Lemma \ref{lemma::AbsBnd} and the bound on the latter follows from the assumed conditions.
\end{proof}

\subsubsection{Proof of Lemma \ref{lemma::AbsBnd}}
\begin{proof}
The balancing equation is \[\sum_{j=1}^nc_{ij}d_id_j=1.\]  This implies that 
\begin{equation}\label{Eq::Maxeq}
d_i\sum_jd_j\leq 1\leq Md_i\sum_jd_j.
\end{equation}
so 
\begin{equation}\label{Eq::Deq}
	\frac{1}{M\sum_j d_j}\leq d_i\leq \frac{1}{\sum_j d_j}
	\end{equation}
Summing inequality \ref{Eq::Maxeq} with respect to $i$ gives 
\begin{align*}
\left(\sum_id_i\right)^2&\leq n\leq M\left(\sum_id_i\right)^2\\
\sum_id_i&\leq \sqrt{n}\leq\sqrt{M}\sum_id_i
\end{align*}
Combining this with inequality \ref{Eq::Deq} completes the proof. 
\end{proof}

\subsubsection{Proof of Lemma \ref{lemma::GCT}}
\begin{proof}
	Let $A$ be a matrix. An interval centered at $a_{ii}$ with radius $\sum_{j\neq i}\left\lvert a_{ij}\right\rvert$ is called a Gershgorin disk. The Gershgorin circle theorem states that any eigenvalue must lie in one of these disks. When applied so positive symmetric doubly stochastic matrices, this implies that the smallest eigenvalue is bounded below by 
    \begin{equation*}
        \min_i\left(p_{ii}-\sum_{j\neq i}p_{ij}\right)=\min_i\left(p_{ii}-\left(1-p_{ii}\right)\right)=\min_i\left(2p_{ii}-1\right).
    \end{equation*}

	 \[\min_i(2\widehat{d}^2_i\widehat{c}_{ii}-1)>0.\] Finally, since $\widehat{D}$ is a diagonal matrix, the eigenvalues of $J_{\widehat{C}}\left(\widehat{d}\right)$ are bounded below by \[\min_j\left(\frac{1}{\widehat{d}_j}\right)\cdot\min_i(2\widehat{d}^2_i\widehat{c}_{ii}-1)\]
\end{proof}
\end{document}